\newcommand{\gd}[1]{}
\newcommand{\arxiv}[1]{#1}
\newenvironment{proofsketch}{\proof}{\endproof}
\DeclarePairedDelimiter{\ceils}{\lceil}{\rceil}
\newcommand{\paramstyle}[1]{\mathrm{#1}}
\newcommand{\crs}{\paramstyle{cr}}
\newcommand{\lcr}{\paramstyle{lcr}}
\newcommand{\nd}{\paramstyle{nd}}
\newcommand{\vi}{\paramstyle{vi}}
\newcommand{\vc}{\paramstyle{vc}}
\newcommand{\td}{\paramstyle{td}}
\newcommand{\fvs}{\paramstyle{fvs}}
\newcommand{\fes}{\paramstyle{fes}}
\newcommand{\mw}{\paramstyle{mw}}
\newcommand{\dn}{\paramstyle{dn}}
\newcommand{\Bandwidth}{\textsc{Bandwidth}}
\newcommand{\TwoSidedkPlanarity}{\textsc{Two-Sided $k$-Planarity}}
\newcommand{\BipartiteCRProb}{\textsc{Bipartite Crossing Number}}
\newcommand{\CRProb}{\textsc{Crossing Number}}
\newcommand{\OPTest}{\textsc{$1$-Planarity Testing}}
\newcommand{\kPTest}{\textsc{$k$-Planarity Testing}}
\newcommand{\UnaryBinPacking}{\textsc{Unary Bin Packing}}
\title{Structural Parameterizations of $k$-Planarity} %
\author{Tatsuya Gima}{Hokkaido University, Japan}{gima@ist.hokudai.ac.jp}{https://orcid.org/0000-0003-2815-5699}{JSPS KAKENHI Grant Number JP24K23847 and JP25K03077.}
\author{Yasuaki Kobayashi}{Hokkaido University, Japan}{koba@ist.hokudai.ac.jp}{https://orcid.org/0000-0003-3244-6915}{JSPS KAKENHI Grant Numbers JP23K28034, JP24H00686, and JP24H00697.}
\author{Yuto Okada}{Nagoya University, Japan}{pv.20h.3324@s.thers.ac.jp}{https://orcid.org/0000-0002-1156-0383}{Supported by JST SPRING, Grant Number JPMJSP2125.}
\authorrunning{T. Gima, Y. Kobayashi, and Y. Okada} %
\keywords{
1-planar graphs,
local crossing number,
beyond planarity,
parameterized complexity,
kernelization
} %
\begin{document}

\maketitle

\begin{abstract}
The concept of $k$-planarity is extensively studied in the context of Beyond Planarity. A graph is $k$-planar if it admits a drawing in the plane in which each edge is crossed at most $k$ times. The local crossing number of a graph is the minimum integer $k$ such that it is $k$-planar. The problem of determining whether an input graph is $1$-planar is known to be NP-complete even for near-planar graphs [Cabello and Mohar, SIAM J. Comput. 2013], that is, the graphs obtained from planar graphs by adding a single edge. Moreover, the local crossing number is hard to approximate within a factor $2 - \varepsilon$ for any $\varepsilon > 0$ [Urschel and Wellens, IPL 2021]. To address this computational intractability, Bannister, Cabello, and Eppstein [JGAA 2018] investigated the parameterized complexity of the case of $k = 1$, particularly focusing on structural parameterizations on input graphs, such as treedepth, vertex cover number, and feedback edge number. In this paper, we extend their approach by considering the general case $k \ge 1$ and give (tight) parameterized upper and lower bound results. In particular, we strengthen the aforementioned lower bound results to subclasses of constant-treewidth graphs: we show that testing $1$-planarity is NP-complete even for near-planar graphs with \emph{feedback vertex set number at most~3} and \emph{pathwidth at most~4}, and the local crossing number is hard to approximate within \emph{any constant factor} for graphs with \emph{feedback vertex set number at most~2}.
\end{abstract}

\section{Introduction}\label{sec:intro}

Graph drawing is recognized as an important area in the research of graph theory and algorithms due to various real-world applications.
In particular, efficient algorithms for drawing graphs on the plane without edge crossings have received considerable attention from various perspectives.
However, to visualize real-world networks, it is necessary to consider drawing \emph{non-planar} graphs in many cases.
This research direction is positioned as Beyond Planarity~\cite{DidimoLM19,Dujmovic00PF24,HT2020}, and a significant amount of effort has been dedicated to analyzing their graph-theoretic properties and designing algorithms for drawing these graphs.
In this context, the problem of drawing an input graph on the plane with the minimum number of crossings is one of the most well-studied problems.
In other words, the problem asks for the \emph{crossing number} of an input graph, which is known to be NP-hard~\cite{GareyJ83}.
Among various studies on this problem, the (recent) progress on the parameterized complexity of \CRProb, where the goal is to determine whether an input graph $G$ can be drawn in the plane with at most $k$ crossings, is remarkable~\cite{VerdiereM21,Grohe04,KawarabayashiR07,LokshtanovP0S0Z25}.
In particular, this problem is \emph{fixed-parameter tractable} (FPT) when parameterized by $k$, that is, it admits an algorithm with running time $f(k)n^{O(1)}$, where $n$ is the number of vertices in the input graph and $f$ is a computable function.

The \emph{local crossing number} is one of the well-studied variations of the standard crossing number~\cite{Ackerman19,BannisterCE18,CabelloM13,GrigorievB07,HoffmannLRT20,KorzhikM13,PachT97,UrschelW21}, as witnessed by the fact that it is selected as the topic for the live challenge\footnote{\url{https://mozart.diei.unipg.it/gdcontest/2025/live/}} held in conjunction with GD~2025.
Let $k$ be an integer.
We say that a graph $G$ is \emph{k-planar} if it can be drawn in the plane so that each edge involves at most $k$ crossings.
The minimum integer $k$ such that $G$ admits a $k$-planar drawing is called the \emph{local crossing number} of $G$, denoted by $\lcr(G)$.
The problem of deciding if $\lcr(G) \le k$ for a given graph $G$ is called \kPTest.
Unlike \CRProb, it is already NP-complete to decide whether $\lcr(G) \le 1$~\cite{BannisterCE18,CabelloM13,GrigorievB07,KorzhikM13}.
More strongly, this problem is NP-complete even when the input is restricted to planar graphs with an extra edge~\cite{CabelloM13} or graphs with constant bandwidth~\cite{BannisterCE18}.
Urschel and Wellens~\cite{UrschelW21} later extended the NP-hardness by showing that \kPTest{} is NP-complete for any fixed $k \geq 1$, even if the input is restricted to graphs with local crossing number at most $k$ or at least $2k$.
This implies that, unless P $=$ NP, there is no polynomial-time $(2-\varepsilon)$-approximation for local crossing number for any $\varepsilon > 0$.

To overcome such an intractability, Bannister, Cabello, and Eppstein~\cite{BannisterCE18} pursued the parameterized complexity of the case of $k = 1$, namely \OPTest{}, by focusing on \emph{structural parameterizations}.
They showed that \OPTest{} is FPT parameterized by treedepth and by feedback edge number (see~\cref{sec:preli} for definitions).
As mentioned above, \OPTest{} is NP-complete even on the classes of graphs with bounded bandwidth.
This intractability is indeed inherited by wider classes of graphs, such as those with bounded cliquewidth, treewidth, and pathwidth.
In this direction, M{\"{u}}nch, Pfister, and Rutter~\cite{MunchPR24} recently considered \kPTest{} on special cases of pathwidth-bounded graphs and gave exact and approximation algorithms.

\begin{table}[t]
\centering
\caption{The table summarizes our and known results.
The columns ``unbounded'', ``parameter'', and ``$k = 1$'' indicate the results when $k$ is taken as input, as parameter, and $k = 1$, respectively.}\label{tbl:results}
\begin{tblr}{
    cells = {halign = c, valign=m},
    hline{1} = {2-Z}{},
    hline{2-Z},
    vline{1} = {2-Z}{},
    vline{2-Z} = {1-Z}{}, 
    cell{1}{1,2,3} = {r=1, c=1}{},
    cell{2}{2} = {r=1, c=2}{bg=green!18}, %
    cell{2}{4} = {bg=green!18}, %
    cell{3}{2} = {r=1, c=3}{bg=red!28}, %
    cell{4}{2} = {bg=red!10}, %
    cell{4}{3} = {bg=green!18}, %
    cell{4}{4} = {bg=green!18}, %
    cell{5}{2} = {r=2,c=1}{bg=red!10}, %
    cell{5}{3} = {r=2,c=2}{bg=green!18}, %
    cell{7}{2} = {r=1,c=3}{bg=red!10}, %
    cell{8}{2} = {r=1,c=3}{bg=red!28}, %
}
 & $k$: unbounded & $k$: parameter & $k = 1$ \\ 
feedback edge set number  & FPT (Thm. \ref{thm:fpt:fes}) & &  FPT \cite{BannisterCE18} \\ 
feedback vertex set number & NP-complete when $\fvs = 2$ (Thm. \ref{thm:1-planarity-np-complete-fvs-2})\\ 
treedepth            & {W[1]-hard \\ (Thm. \ref{thm:k-planarity-w-hard-td})} & {FPT (Thm. \ref{thm:fpt:td-k}) \\ (non-uniform~FPT~\cite{Zehavi22})} & FPT \cite{BannisterCE18} \\ 
longest induced path & {W[1]-hard \\ (Thm. \ref{thm:hardness:tcn})} & FPT (Cor. \ref{cor:exPnFPT}) \\ 
twin cover number & W[1]-hard  & \\ 
distance to path forest & W[1]-hard (Thm. \ref{thm:1-planarity-w1-hard-distance-to-path-forest})\\ 
domination number & NP-complete when $\dn = 2$ (Cor. \ref{cor:1-planarity-npc-domination-number-2})
\end{tblr}
\end{table}

\begin{figure}
  \begin{subfigure}[b]{.49\textwidth}
    \centering
    \includegraphics[page=1]{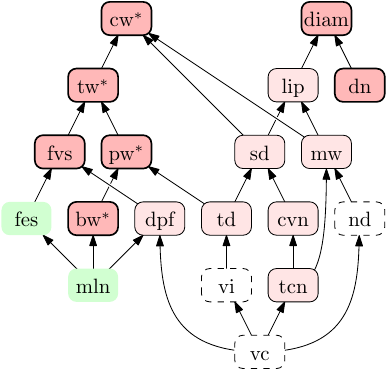}
    \subcaption{}
    \label{fig:hierarchy:without-k}
  \end{subfigure}
  \hfill
  \begin{subfigure}[b]{.49\textwidth}
    \centering
    \includegraphics[page=2]{figure/hierarchy.pdf}
    \subcaption{}
    \label{fig:hierarchy:plus-k}
  \end{subfigure}
  \caption{A visualization of \cref{tbl:results} of the cases where $k$ is (a) unbounded and (b) parameter. No borderline, dotted borderline, normal borderline, bold borderline mean that the complexity parameterized by the parameter is FPT, unknown, W[1]-hard, paraNP-complete, respectively. The complexities for parameters with ($*$) are already known results and the others are our results. An arrow indicates that the upper parameter is bounded from above by a function of the lower parameter. See the footnote\footref{footnote:parameter-names} for the abbreviations of parameter names.}
  \label{fig:hierarchies}
\end{figure}

In this paper, we extend results of \cite{BannisterCE18} by considering the general case $k \ge 1$ and show a fine parameterized complexity landscape with respect to graph width parameters (see \cref{tbl:results} and \cref{fig:hierarchies} for a summary of our results\footnote{In \cref{fig:hierarchies}, the names are abbreviated as follows: cw is clique width, diam is diameter, tw is treewidth, lip is longest induced path, dn is domination number, fvs is feedback vertex set, pw is pathwidth, sd is shrub-depth, mw is modular-width, fes is feedback edge set, bw is bandwidth, dpf is distance to path forest, td is treedepth, cvn is cluster vertex deletion number, nd is neighborhood diversity, mln is max leaf number, vi is vertex integrity, tcn is twin cover number, and vc is vertex cover number.\label{footnote:parameter-names}}).
We extend algorithms of \cite{BannisterCE18} by showing that \kPTest{} is FPT parameterized by feedback edge set number (\cref{thm:fpt:fes}) and by treedepth plus $k$ (\cref{thm:fpt:td-k}).
More generally, we show that \kPTest{} is FPT on $P_t$-free graphs when parameterized by $t + k$ (\cref{cor:exPnFPT}).
We also give polynomial kernelizations for \kPTest{} with respect to vertex cover number (\cref{thm:vc-kernel}) and neighborhood diversity (\cref{thm:nd-kernel}).
On the negative side, we show that \kPTest{} is W[1]-hard when parameterized solely by treedepth (\cref{thm:k-planarity-w-hard-td}), which indicates that \cref{thm:fpt:td-k} is tight in a certain sense, and by twin cover number (\cref{thm:hardness:tcn}).
We also show that \OPTest{} is W[1]-hard parameterized by distance to path forest (\cref{thm:1-planarity-w1-hard-distance-to-path-forest}).
Similar to \cite{CabelloM13}, the last result also proves that \OPTest{} remains NP-complete even on planar graphs with a single additional edge.
Our reduction shows that \OPTest{} is NP-complete even when further adding restrictions that the input graph has feedback vertex set number at most~3.
Using a completely different reduction, we show that \OPTest{} is NP-complete even if the input is restricted to graphs with feedback vertex set number~2 (\cref{thm:1-planarity-np-complete-fvs-2}). 
This reduction also shows that there is no constant factor approximation for \kPTest{} unless P $=$ NP, which significantly strengthens the $(2 - \varepsilon)$-inapproximability result of Urschel and Wellens~\cite{UrschelW21}.

\gd{Several proofs (marked with $\star$) are omitted and available in the full version.}
\arxiv{Several proofs (marked with clickable $\star$) are deferred to the Appendix.}

\section{Preliminaries}\label{sec:preli}
Let $G$ be a graph.
The vertex set and edge set of $G$ are denoted by $V(G)$ and $E(G)$, respectively.
For a vertex set $X \subseteq V(G)$ (resp.\ edge set $X \subseteq E(G)$), $G - X$ denotes the graph obtained from $G$ by deleting all elements in $X$.
The subgraph of $G$ induced by $X \subseteq V(G)$ is denoted by $G[X]$.
For $v \in V(G)$, $N_G(v)$ denotes the set of neighbors of $v$.

A (\emph{topological}) \emph{drawing} $\Gamma$ of $G$ is a representation in the plane that maps the vertices of $G$ to distinct points in the plane and each edge of $G$ to a non-self-intersecting (Jordan) curve connecting the points corresponding to the endpoints.
In the rest of this paper, we may simply refer to these points as vertices and these curves as edges when no confusion is possible. 
A \emph{crossing} in $\Gamma$ is an intersection of distinct edges that is not a common endpoint.
We assume that, in any drawing, an edge does not pass through any vertex other than its endpoints, and no three (or more) edges cross at a common point.
The \emph{crossing number} of $G$, denoted by $\crs(G)$, is the minimum number of crossings over all possible drawings of $G$.
For an integer $k$, a drawing $\Gamma$ is said to be \emph{$k$-planar} if each edge is crossed at most $k$ times.
A graph is said to be $k$-planar if it admits a $k$-planar drawing.
The \emph{local crossing number} of $G$, denoted by $\lcr(G)$, is the minimum integer $k$ such that $G$ is $k$-planar.
Throughout this paper, we use the following property of local crossing number.

\begin{restatable}[\restateref{lem:subdivision}]{lemma}{lemSubdivision}
\label{lem:subdivision}
Let $G$ be a graph and $k$ be a positive integer.
Let $G_k$ be the graph obtained from $G$ by subdividing each edge $k - 1$ times.
Then, $\lcr(G) \le k$ if and only if $\lcr(G_k) \le 1$.
\end{restatable}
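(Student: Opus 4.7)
The plan is to prove each direction by explicitly converting one drawing into the other, using the fact that a $k$-planar edge in $G$ and a path of $k$ edges in $G_k$ each consists of a curve admitting at most $k$ crossings.

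For the forward direction, I start with a $k$-planar drawing $\Gamma$ of $G$ and modify it into a $1$-planar drawing of $G_k$ without changing any curves, only by inserting the $k-1$ subdivision vertices as new points on each edge. Fix an edge $e$ of $G$ and let $c\le k$ be the number of crossings along its curve in $\Gamma$. If $c\ge 1$, I place $c-1$ of the subdivision points strictly between consecutive crossings along the curve, which isolates each crossing into its own arc; the remaining $k-c$ subdivision points can be inserted anywhere on the curve, producing extra crossing-free arcs. The resulting $k$ arcs correspond to the $k$ sub-edges replacing $e$ in $G_k$, and each has at most one crossing, giving a $1$-planar drawing of $G_k$.

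For the backward direction, I start with a $1$-planar drawing $\Gamma'$ of $G_k$ and, for each edge $e=uv$ of $G$, concatenate the curves of its $k$ sub-edges (in order along the $u$-to-$v$ path) to form a single curve $\gamma_e$ from $u$ to $v$. Since each sub-edge carries at most one crossing, $\gamma_e$ has at most $k$ crossing points in total. Crossings of $\gamma_e$ split into two kinds: crossings with sub-edges of other original edges, which become genuine crossings of $e$ with the corresponding other edge of $G$, and self-crossings where two sub-edges of $e$ cross each other. The target drawing is obtained by treating each $\gamma_e$ as the drawing of $e$ and forgetting the subdivision vertices.

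The main obstacle is precisely these self-crossings, because a topological drawing must represent each edge by a non-self-intersecting Jordan curve. I resolve this by the standard shortcut operation: if $\gamma_e$ passes through a point $p$ at parameters $t_1<t_2$, I replace the loop $\gamma_e([t_1,t_2])$ by the single point $p$, yielding a curve whose image is a subset of the original. This strictly decreases the number of self-crossings while never increasing the number of crossings with any other curve (including curves already shortcutted for other edges, since shortcutting only shrinks the image). Iterating this operation for each edge produces a Jordan curve $\gamma'_e$ with at most $k$ crossings with other edges in total, yielding a $k$-planar drawing of $G$ after a generic perturbation to clean up tangencies and triple intersections.
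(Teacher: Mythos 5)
Your proposal is correct and follows essentially the same route as the paper: the forward direction places the $k-1$ subdivision vertices so that each resulting arc carries at most one crossing, and the backward direction concatenates the sub-edge curves and suppresses self-crossing loops (exactly the paper's Figure-based loop-suppression argument) before reading off a $k$-planar drawing of $G$. No gaps.
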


In this paper, we present parameterized algorithms and hardness results with respect to several graph width parameters.
However, some of them may not need to be defined precisely.
Hence, we only provide the definitions that are directly needed to present our results.
For basic concepts in parameterized complexity, we refer the reader to \cite{Aoihon}.

Let $G$ be a graph.
A \emph{vertex cover} of $G$ is a set of vertices $S$ such that $G - S$ is edge-less, and the \emph{vertex cover number} of $G$ is the minimum size of a vertex cover of $G$, which is denoted by $\vc(G)$.
The \emph{feedback vertex set number} (resp.\ \emph{feedback edge set number}) of $G$, denoted by $\fvs(G)$ (resp.\ $\fes(G)$), is the minimum cardinality of a vertex set (resp.\ edge set) $X$ such that $G - X$ is acyclic.
An \emph{elimination forest} of $G$ is a rooted forest $F$ (i.e., a set of rooted trees) such that $V(F) = V(G)$ and for every edge in $G$, one of the endpoints is an ancestor of the other in $F$.
The \emph{treedepth} of $G$, denoted by $\td(G)$, is the minimum integer $k$ such that $G$ has an elimination forest of height~$k$.
Two vertices $u$ and $v$ are called \emph{true twins} if they are adjacent and $N_G(u) = N_G(v)$; they are called \emph{false twins} if they are non-adjacent and $N_G(u) = N_G(v)$.
The \emph{neighborhood diversity} of $G$ is at most $k$ if the vertex set of $G$ can be partitioned into $k$ sets $V_1, \dots, V_k$ such that either all the pairs in $V_i$ are true twins or they are false twins.
Each set $V_i$ is called a \emph{twin class}.
Note that each true twin class induces a clique and each false twin class induces an independent set in $G$.
The neighborhood diversity of $G$ is denoted by $\nd(G)$.
A vertex set $S \subseteq V(G)$ is called a \emph{twin cover} of $G$ if each connected component of $G - S$ consists of true twins in $G$.
The \emph{twin cover number} of $G$ is the minimum size of a twin cover of $G$.
The \emph{max leaf number} of $G$ is the maximum number of leaves of a spanning tree of $G$.

\section{Hardness}

In this section, we give several intractability results for \kPTest{} and \OPTest{}.
In particular, we show that \kPTest{} is hard to approximate within any constant factor in polynomial time, even on graphs with feedback vertex set number at most~2, and \OPTest{} is NP-complete even on near-planar graphs with feedback vertex set number~3, where a graph is \emph{near-planar} if it can be obtained from a planar graph by adding a single edge.
These two results improve the previous results of \cite{UrschelW21} and \cite{CabelloM13}.
To this end, we design two completely different reductions, one is given from \TwoSidedkPlanarity{} and the other is given from \UnaryBinPacking{}, which also yield several consequences other than these two results.

\subsection{Inapproximability for Graphs with Feedback Vertex Set Number~2}

In this subsection, we show that it is NP-hard to approximate the local crossing number of a given graph within any constant factor, even if the graph has feedback vertex set number 2.

Before describing our reductions, we start with a technical lemma, showing that by adding some vertices and edges to a graph, one can impose a certain form on its $k$-planar drawings.

Let $G$ be a graph with $m$ edges and let $k$ be a positive integer.
Let $X \subseteq V(G)$ be a nonempty vertex subset.
We define a new graph $G'$ by adding a vertex $r$ and $km + 1$ paths of length~2 between $r$ and $v$ for each $v \in X$ to $G$.\footnote{The length of a path is defined as the number of edges in it.}
We refer to these paths of length~2 as \emph{spokes}.

\begin{restatable}[\restateref{lem:crossing-free-spokes}]{lemma}{LemCrossingFreeSpokes}
\label{lem:crossing-free-spokes}
    Suppose that $G'$ has a $k$-planar drawing $\Gamma$.
    Then, there is a $k$-planar drawing $\Gamma'$ of $G'$ such that each spoke has no crossings and the subdrawings of $\Gamma$ and $\Gamma'$ induced by $G$ are identical.
\end{restatable}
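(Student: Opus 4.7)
The plan is to find a face of the planarization of the $G$-portion of $\Gamma$ that contains $r$ in its interior and has every vertex of $X$ on its boundary, and then redraw all the spokes inside this face so that each becomes crossing-free, while keeping the drawing of $G$ fixed throughout. The first ingredient is a pigeonhole argument that produces, for each $v \in X$, a spoke $P_v$ whose two edges are not crossed by any edge of $G$ in $\Gamma$: since $G$ has $m$ edges and each is crossed at most $k$ times in $\Gamma$, the total number of crossings between $G$-edges and spokes from $r$ to $v$ is at most $km$, and because there are $km+1$ such spokes, at least one of them incurs no crossing with a $G$-edge.

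Next, let $H$ denote the plane graph obtained from the subdrawing of $G$ in $\Gamma$ by replacing each crossing of two $G$-edges with a dummy vertex, and let $F$ be the face of $H$ containing the point at which $r$ is drawn. The image of $P_v$ is a curve from $r$ to $v$ that avoids every $G$-edge, so it lies entirely in the closure of $F$; in particular, $v \in \partial F$. Applying this to each $v \in X$ yields $X \subseteq \partial F$, so $r$ together with all of $X$ sits in the closure of a single face of $H$.

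Finally, I would construct $\Gamma'$ by keeping the drawing of $G$ unchanged, placing $r$ at its original point in $F$, selecting for each $v \in X$ a simple arc from $r$ to $v$ inside $F$ such that the chosen arcs are pairwise disjoint except at $r$, and then fattening each arc into a bundle of $km+1$ near-parallel copies that realise the $km+1$ spokes between $r$ and $v$, with the middle vertex of each spoke placed in the interior of $F$ along its own copy. By construction, every spoke is then crossing-free in $\Gamma'$, and the crossings among $G$-edges are unchanged, so $\Gamma'$ is $k$-planar. The step I expect to require the most care is the pairwise-disjoint routing of the $|X|$ arcs when $F$ fails to be simply connected: this calls for a short topological argument (for instance, building the arcs one at a time and rerouting each new one around those already drawn inside the connected region $F$), whereas the pigeonhole estimate and the identification of $F$ are routine once the setup is in place.
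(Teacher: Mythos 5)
Your argument is correct, and after the shared opening it takes a genuinely different route from the paper's. Both proofs begin with the same pigeonhole step: the $m$ edges of $G$ carry at most $km$ crossings in total, so each $v \in X$ has a spoke $P_v$ that crosses no edge of $G$. The paper then stays with these curves directly: whenever two chosen spokes $P_u$ and $P_v$ cross, it exchanges their initial segments at the crossing point, which keeps each curve inside the union of the two original curves (hence still disjoint from the drawing of $G$) while strictly reducing the number of crossings; once the chosen spokes are pairwise non-crossing, the remaining spokes are bundled along them. You instead pass to the planarization $H$ of the subdrawing of $G$, note that $P_v \setminus \{v\}$ is connected and disjoint from the drawing of $H$ and therefore lies in the face $F$ containing $r$, conclude that every $v \in X$ is accessible from $F$, and then re-route a fresh star of internally disjoint arcs inside $F$ before fattening each into $km+1$ parallel spokes. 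The trade-off is that the paper's uncrossing is local and elementary, whereas your face-based finish is conceptually cleaner but concentrates the work in the routing step you flag: that finitely many accessible points of $\partial F$ can be joined to the interior point $r$ by arcs pairwise disjoint except at $r$, even when $F$ is not simply connected. That step does go through --- after drawing some of the arcs, every component of $F$ minus the resulting star still has $r$ in its closure (one can slide alongside an existing arc back into a small disk around $r$), so the next arc can always be routed --- and it costs roughly the same amount of topological care that the paper spends justifying its exchange operation. One small point worth making explicit in a write-up: the curve witnessing accessibility of $v$ is $P_v$ itself, so you never need a general accessibility statement for boundary points of faces.
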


\begin{proofsketch}
    Since the edges in $G$ can cross other edges $km$ times in total, for each $u \in X$, there is a spoke between $r$ and $u$ that does not cross any edges of $G$.
    Let $P_u$ be such a spoke for each $u \in X$ and $\mathcal{P} \coloneqq \{P_u \mid u \in X\}$.
    Although spokes in $\mathcal{P}$ may cross, we can untangle them, preserving the subdrawing for $G$.
    Then we redraw the other spokes along $\mathcal{P}$.
\end{proofsketch}

We will use this lemma in the following form.

\begin{corollary}\label{cor:crossing-free-spokes}
    Let $S \subseteq V(G)$.
    Let $G'$ be a graph obtained from $G$ by adding a vertex $s$ and $k|E(G)| + 1$ spokes between $s$ and each $u \in S$.
    Moreover, let $S' \subseteq V(G')$ and let $G''$ be a graph obtained from $G'$ by adding a vertex $s'$ and $k|E(G')| + 1$ spokes between $s'$ and each $u \in S'$.
    Suppose that $G''$ has a $k$-planar drawing.
    Then, there is a $k$-planar drawing of $G''$ such that all spokes have no crossings.
\end{corollary}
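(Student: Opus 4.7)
The plan is to apply Lemma \ref{lem:crossing-free-spokes} twice in succession. Starting from the assumed $k$-planar drawing of $G''$, I would first invoke the lemma on the pair $(G', G'')$ with $X := S'$. This yields a $k$-planar drawing of $G''$ in which every $s'$-spoke is crossing-free and the subdrawing induced by $G'$ coincides with the original one; in particular, this induced subdrawing is a $k$-planar drawing of $G'$.

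Next, I would restrict attention to that $k$-planar drawing of $G'$ and invoke Lemma \ref{lem:crossing-free-spokes} again, now on the pair $(G, G')$ with $X := S$. The result is a $k$-planar drawing of $G'$ in which every $s$-spoke is crossing-free while the subdrawing of $G$ is preserved. Reattaching $s'$ together with the crossing-free $s'$-spokes produced by the first step gives a candidate drawing of $G''$ whose two spoke families are, tentatively, both crossing-free.

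The main obstacle is that the second invocation may reposition vertices in $V(G') \setminus V(G)$, namely $s$ and the middle vertices of the $s$-spokes. If any such vertex happens to lie in $S'$, then the $s'$-spokes incident to it, drawn from step one, attach to the old position and no longer have the correct endpoints. To resolve this, I would look inside the proof of Lemma \ref{lem:crossing-free-spokes} and use the fact that its modifications are confined to a thin tubular neighborhood of a selected system of crossing-free good spokes $\mathcal{P}$: outside this neighborhood the drawing is untouched. Hence for each moved endpoint $u \in S' \cap (V(G') \setminus V(G))$, the affected $s'$-spoke can be locally rerouted along the new crossing-free $s$-spoke leading to $u$, reaching the new position without meeting any other edge. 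This local surgery does not disturb the $G$-subdrawing, the $s$-spokes, or the unaffected $s'$-spokes, so the resulting drawing of $G''$ is $k$-planar and has all spokes crossing-free, as required.
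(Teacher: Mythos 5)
Your proposal is correct and follows essentially the same route as the paper's proof: apply Lemma~\ref{lem:crossing-free-spokes} first to the $s'$-spokes and then to the $s$-spokes, using the fact that the second application only modifies the drawing inside a small neighborhood of the selected spokes $P_u$, which cannot meet the $s'$-spokes because those have already been made crossing-free --- that last observation is the linchpin of the paper's argument and is worth stating explicitly rather than leaving it implicit in your tubular-neighborhood remark, since it is what guarantees both that the rerouted $s$-spokes avoid the reattached $s'$-spokes and that your candidate drawing is $k$-planar. Your additional worry about repositioned vertices of $V(G')\setminus V(G)$ lying in $S'$ is legitimate for the statement in full generality (note that $s$ itself is never moved, only interior spoke vertices are; the paper's proof silently ignores this case, and your ``local rerouting'' fix would need more care because the new endpoint position can be far from the old one), but it never arises in the paper's applications, where $S'$ consists of vertices of $G$ together with $s$.
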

\begin{proof}
    By~\cref{lem:crossing-free-spokes}, $G''$ has a $k$-planar drawing $\Gamma'$ in which all the spokes incident to $s'$ have no crossings.
    Let $\Gamma$ be the subdrawing of $\Gamma'$ induced by $G'$.
    As shown in the proof of \cref{lem:crossing-free-spokes}, for each $u \in S$, there is a spoke $P_u$ between $s$ and $u$ that has no crossing with any edge of $G$ in $\Gamma$.
    Since this spoke $P_u$ does not cross any spoke incident to $s'$, the uncrossing operation and redrawing spokes between $s$ and $u$ used in \cref{lem:crossing-free-spokes} never create a new crossing with spokes incident to $s'$.
    Thus, by applying \cref{lem:crossing-free-spokes} to $G'$ and $\Gamma$, we have a $k$-planar drawing of $G''$ being claimed.
\end{proof}

We now turn to our reduction from \TwoSidedkPlanarity{}, in which we are given a bipartite graph $G = (X \cup Y, E)$ with two independent sets $X$ and $Y$, and an integer $k$.
The goal is to determine whether $G$ has a \emph{2-layer $k$-planar drawing}, which is a special case of a $k$-planar drawing in which the vertices in $X$ are drawn on a line, the vertices in $Y$ are drawn on another line parallel to the first line, and each edge is drawn as a straight line.

The idea of our reduction is the same as that of Garey and Johnson~\cite{GareyJ83} from \BipartiteCRProb{} to \CRProb{}.
We remark that, however, the proof is more involved due to the difficulty of $k$-planarity.

Let $\langle G = (X, Y, E), k \rangle$ be an instance of \TwoSidedkPlanarity{}, where $n_X = |X|$, $n_Y = |Y|$, and $m = |E|$.
We construct a graph $G'$ from $G$ as follows (see \cref{fig:reduction:fvs:image}):
\begin{enumerate}
    \item we introduce two vertices $u_X, u_Y$;
    \item for each $x \in X$, add $\ell_1 \coloneqq km + 1$ spokes between $u_X$ and $x$;
    \item for each $y \in Y \cup \{u_X\}$, add $\ell_2 \coloneqq k(\ell_1  n_X + m) + 1$ spokes between $u_Y$ and $y$.
\end{enumerate}

\begin{figure}
    \centering
    \includegraphics[page=3]{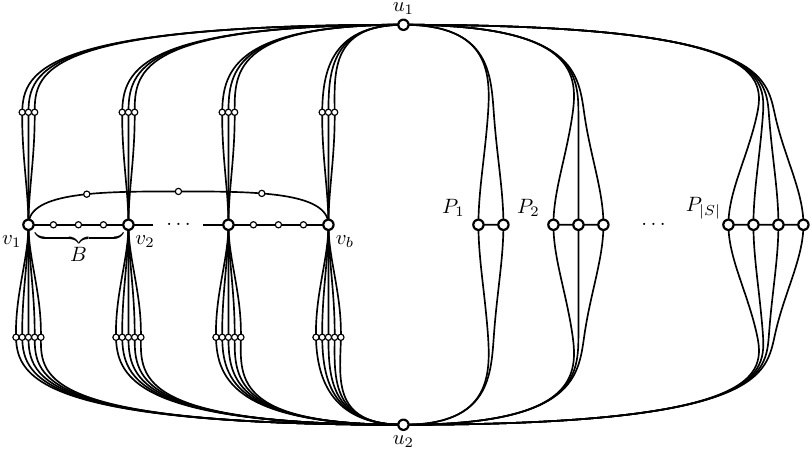}
    \caption{An illustration of the graph $G'$ constructed from an instance $\langle G, k \rangle$ of \TwoSidedkPlanarity{}. Note that the numbers of spokes are different from that of the actual construction.}
    \label{fig:reduction:fvs:image}
\end{figure}

Then, $\langle G', k\rangle$ is the instance we construct for \kPTest{}.
It is easy to verify that the above construction can be done in polynomial time.

\begin{lemma}\label{lem:reduction}
    The graph $G$ admits a 2-layer $k$-planar drawing if and only if the graph $G'$ admits a $k$-planar drawing.
\end{lemma}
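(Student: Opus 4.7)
For the forward direction, given a 2-layer $k$-planar drawing of $G$ with $X$ on a horizontal line $L_X$ and $Y$ on a parallel line $L_Y$, I extend it to $G'$ by placing $u_X$ in the open half-plane above $L_X$ and $u_Y$ in the open half-plane below $L_Y$. The $\ell_1$ spokes from $u_X$ to each $x \in X$ are drawn (with fresh middle vertices) as nested curves sitting entirely above $L_X$, fanning out to the different $X$-vertices along the line; the $u_Y$-spokes to $Y$ are drawn symmetrically below $L_Y$. The $\ell_2$ spokes between $u_X$ and $u_Y$ are drawn as $\ell_2$ parallel arcs routed around one end of the whole picture, far from everything else. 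Every new edge thus has $0$ crossings, every edge of $G$ retains its at most $k$ crossings from the 2-layer drawing, so the result is $k$-planar.

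For the backward direction, suppose $G'$ admits a $k$-planar drawing. By \cref{cor:crossing-free-spokes}, applied with $s=u_X$, $S=X$ and $s'=u_Y$, $S'=Y\cup\{u_X\}$, I may assume every spoke is crossing-free. The $\ell_2$ uncrossed spoke paths between $u_X$ and $u_Y$, together with their endpoints, form a plane multigraph equivalent to $\ell_2$ parallel edges between $u_X$ and $u_Y$ and therefore partition the plane into $\ell_2$ open faces, which I call \emph{regions}. Any uncrossed spoke incident to $u_X$ (resp.\ $u_Y$) stays inside a single region, so each $x\in X$ lies in a well-defined region $R(x)$, and similarly each $y\in Y$ lies in $R(y)$. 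Since every edge of $G$ is also uncrossed by any spoke, its curve is confined to a single region, so $R(x)=R(y)$ whenever $(x,y)\in E(G)$; chasing bipartite edges, all vertices of each connected component $C$ of $G$ lie in one common region $R(C)$.

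Within each region $R$ (a topological disk carrying $u_X$ and $u_Y$ on its boundary), the spokes going into $R$ from $u_X$ occupy a contiguous arc of the rotation at $u_X$, and after bundling the $\ell_1$ parallel spokes to the same $x$ we read off a linear order $\pi^R_X$ of $\{x\in X:R(x)=R\}$ along that arc; symmetrically one obtains $\pi^R_Y$ at $u_Y$. Concatenating these per-region orders over the components of $G$ yields global linear orders $\pi_X,\pi_Y$ of $X$ and $Y$. I claim the straight-line 2-layer drawing of $G$ using $\pi_X,\pi_Y$ is $k$-planar. Two edges $(x,y),(x',y')\in E(G)$ cross in the 2-layer drawing iff their endpoints interleave in $\pi_X,\pi_Y$; by a standard topological argument, after contracting the uncrossed spoke bundles to slide each $x$ along its spokes to a point on $\partial R$ near $u_X$ and each $y$ to a point on $\partial R$ near $u_Y$ (an isotopy of the drawing inside $R$ that leaves the set of $G$-crossings unchanged), the edges $(x,y)$ and $(x',y')$ become chords of a disk whose boundary endpoints interleave, so they must already cross in the original drawing. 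Therefore, for every edge $e\in E(G)$, the number of edges crossing $e$ in the 2-layer drawing is at most the number of edges crossing $e$ in the given drawing of $G'$, which is at most $k$.

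The main technical obstacle is the chord-interleaving step, because the $X$- and $Y$-vertices are interior points of $R$ rather than boundary points. To make it rigorous I plan to treat each uncrossed spoke bundle as an isotopy corridor from a boundary point of $R$ (close to $u_X$ or $u_Y$) to $x$ or $y$, verify that this isotopy can be performed in a small neighbourhood that no other $G$-edge enters except where it already ended at $x$ or $y$, and thereby conclude that after isotopy all $X$- and $Y$-vertices lie on $\partial R$ in the extracted linear orders with the set of $G$-edge crossings preserved; the classical fact that chords in a disk whose boundary endpoints interleave must cross then closes the argument.
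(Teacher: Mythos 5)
Your proof is correct, and while it shares the paper's opening move --- applying \cref{cor:crossing-free-spokes} to make every spoke crossing-free --- the rest of the backward direction is organized quite differently. The paper contracts the $u_X$--$u_Y$ spoke bundle into a single universal vertex $u$ and constructs one noose through all of $V(G)$ separating the star at $u$ from the edges of $G$; the consecutiveness of $X$ and of $Y$ on that noose is deduced from the non-crossing spoke paths through $u_X$ and $u_Y$, and the drawing is then flattened into an arc diagram and folded into a 2-layer drawing. You instead let the $\ell_2$ uncrossed $u_X$--$u_Y$ spokes cut the plane into regions, localize each component of $G$ to one region, read the orders off the rotations at $u_X$ and $u_Y$ per region, and finish with an isotopy-to-the-boundary plus chord-interleaving argument. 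Your route makes the ``crossings are preserved'' step more explicit (the paper asserts it rather tersely via the topological equivalence of its two drawings), at the cost of extra per-region bookkeeping; the paper's noose gives the global cyclic order in one stroke. Two details you should pin down, neither of which is a real gap: (i) the $\ell_1$ spokes to a fixed $x$ must be consecutive in the rotation at $u_X$ for $\pi^R_X$ to be well defined --- this holds because two spokes to $x$ form a closed curve that any interleaved spoke to a different target would have to cross; and (ii) the direction in which you read the rotation at $u_Y$ must be the one induced by traversing the region boundary, else you obtain the 2-layer crossing condition for the reversed order of $Y$. The isotopy step you flag is indeed the crux, but it is the standard fact that sliding a vertex along an uncrossed incident arc extends to an ambient isotopy of the plane and therefore preserves the entire crossing pattern (you do not need the corridor to be free of other $G$-edges --- they are simply deformed along with it).
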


\begin{proof}
    It is clear that we can obtain a $k$-planar drawing of $G'$ from a 2-layer $k$-planar drawing of $G$ by drawing $u_X$, $u_Y$, and spokes as \cref{fig:reduction:fvs:image}.
    Hence, we only show the other direction.

    Suppose that the graph $G'$ admits a $k$-planar drawing $\Gamma'$.
    By~\cref{cor:crossing-free-spokes}, we can assume that all the spokes in $G'$ have no crossings in $\Gamma'$.
    We then replace the set of spokes connecting two vertices with a single edge between them and contract the edge between $u_X$ and $u_Y$ by identifying them into a new vertex $u$ in $\Gamma'$.
    We let $G^*$ denote the graph obtained in this way.
    As all spokes have no crossings, the obtained drawing of $G^*$, denoted by $\Gamma^*$, is still $k$-planar.
    Let us note that $G^*$ is isomorphic to the graph obtained from $G$ by adding a universal vertex $u$ (which is a vertex adjacent to all the vertices in $G$).
    We claim that we can draw a closed curve $C$ passing through all the vertices in $V(G^*) \setminus \{u\}$ such that
    \begin{enumerate}
        \item no edge of $G^*$ intersects $C$ (i.e., $C$ is a noose in $\Gamma^*$);
        \item all the edges incident to $u$ are contained in one of the two regions $R$ with boundary $C$;
        \item all the edges in $E(G)$ are contained in the other region with boundary $C$;
        \item all the vertices in $X$ appear consecutively on $C$;
        \item all the vertices in $Y$ appear consecutively on $C$.
    \end{enumerate}
    To this end, we draw a curve $C$ while keeping track of the cyclic ordering of the edges incident to $u$ (and hence the neighbors of $u$) in $\Gamma^*$.
    As these incident edges have no crossings, we can draw such a curve satisfying (1), (2), and (3).
    To see the properties (4) and (5), suppose for contradiction that there are $x, x' \in X$ and $y, y' \in Y$ such that $x, y, x', y'$ appear in this order on $C$.
    Due to the property~(2), $C$ can be seen as a closed curve on $\Gamma'$ such that the region corresponding to $R$ contains all spokes of $G'$.
    Let $P_x$ be a path in $G'$ between $x$ and $x'$ via $u_X$ consisting only of spokes.
    We define $P_y$ similarly.
    In $\Gamma'$ those paths are contained in $R$ and $C$.
    However, since the vertices $x, x', y, y'$ are all placed on $C$, such a drawing must yield a crossing, which contradicts the fact that spokes in $G'$ have no crossings in $\Gamma'$.

    Let $X = \{x_1, \dots, x_{n_X}\}$ and let $Y = \{y_1, \dots, y_{n_Y}\}$ such that $x_1, \dots, x_{n_X}, y_1, \dots, y_{n_Y}$ appear on $C$ in this order.
    We now convert the drawing $\Gamma^*$ into a drawing $\hat{\Gamma}$ such that 
    \begin{itemize}
        \item all the vertices in $V(G^*) \setminus \{u\}$ are placed on a line $\ell$ in the order $x_1, \dots, x_{n_X}, y_1, \dots, y_{n_Y}$;
        \item all the edges incident to $u$ are drawn as straight segments in a half plane separated by $\ell$;
        \item all the other edges are drawn in the other half plane in such a way that the curve corresponding to an edge forms a semicircular arc between the endpoints.
    \end{itemize}
    See~\cref{fig:reduction:fvs:G-and-u:arc-diagram} for an illustration.
    \begin{figure}
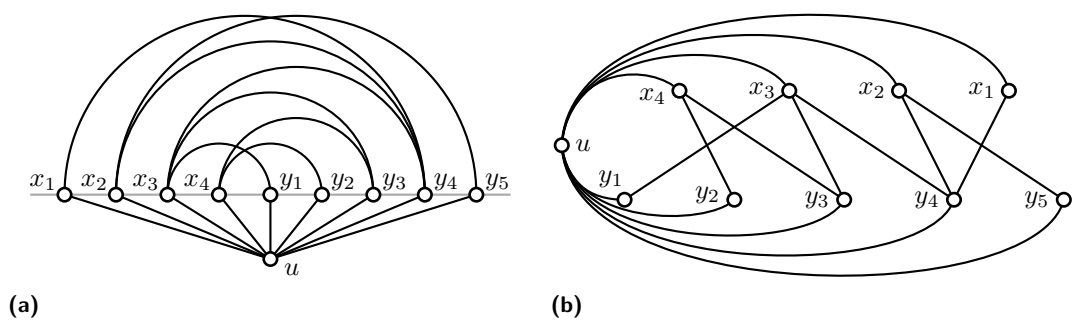

      \begin{subfigure}[b]{.49\textwidth}
        \centering
        \includegraphics[page=4]{figure/reduction.pdf}
        \subcaption{}
        \label{fig:reduction:fvs:G-and-u:arc-diagram}
      \end{subfigure}
      \hfill
      \begin{subfigure}[b]{.49\textwidth}
        \centering
        \includegraphics[page=5]{figure/reduction.pdf}
        \subcaption{}
        \label{fig:reduction:fvs:G-and-u:2-layer-drawing}
      \end{subfigure}
      \caption{Two topologically equivalent drawings of $G^*$, where the subdrawing induced by $G$ is (a)~an arc diagram, and (b)~a 2-layer drawing.}
      \label{fig:reduction:fvs:G-and-u}
    \end{figure}
    Observe that this drawing $\hat{\Gamma}$ is also $k$-planar as two edges of $G^*$ cross in $\hat{\Gamma}$ only if they cross in $\Gamma^*$ as well.
    Moreover, $\hat{\Gamma}$ is \emph{simple}, meaning that no pair of edges crosses more than once and no two edges incident to a common vertex cross.
    This in turn can be converted into a $k$-planar drawing $\Gamma$ as in \cref{fig:reduction:fvs:G-and-u:2-layer-drawing}. The subdrawing of $\Gamma$ induced by $G^* - \{u\}$ is a $2$-layer $k$-planar drawing such that two edges in $G^* - \{u\}$ cross if and only if they cross in $\hat{\Gamma}$.
    Hence, $G$ ($=G^* - \{u\}$) admits a 2-layer $k$-planar drawing.
\end{proof}

This reduction immediately gives us the following consequence.
Note that \cref{thm:1-planarity-np-complete-fvs-2} is optimal in the sense that every graph with feedback vertex set number $1$ is planar.

\begin{theorem}\label{thm:1-planarity-np-complete-fvs-2}
    \OPTest{} is NP-complete even if the given graph has feedback vertex set number~$2$.
\end{theorem}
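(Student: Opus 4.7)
The plan is to derive the theorem from Lemma~\ref{lem:reduction} instantiated with $k=1$: that lemma already provides a polynomial-time equivalence between 2-layer 1-planar drawability of the input bipartite graph $G$ and 1-planarity of the constructed graph $G'$. Two further ingredients are still needed, namely an NP-hard source restricted appropriately and the feedback vertex set bound on $G'$.

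For the feedback vertex set bound, I would take $S = \{u_X, u_Y\}$. Every spoke is a length-$2$ path with at least one endpoint in $S$, so deleting $S$ reduces each spoke either to a pendant edge at some vertex of $X \cup Y$ (for spokes incident to exactly one of $u_X, u_Y$) or to an isolated vertex (for spokes between $u_X$ and $u_Y$). Hence $G' - S$ is $G$ with additional pendants and isolated vertices. If $G$ is a forest, so is $G' - S$, and thus $\fvs(G') \le |S| = 2$. For tightness of the bound, note that every graph with $\fvs \le 1$ is planar (a forest together with one apex vertex admits a planar embedding), hence trivially 1-planar, so \OPTest{} is in P on that class; this also matches the remark in the theorem statement that $\fvs = 2$ is optimal. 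Membership of \OPTest{} in NP is immediate by exhibiting a 1-planar drawing as a certificate.

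The remaining ingredient, which is also the principal obstacle, is to secure NP-hardness of \TwoSidedkPlanarity{} at $k = 1$ restricted to bipartite forests, since without the forest restriction the argument above only gives $\fvs(G') \le 2 + \fvs(G)$. The approach I would take is first to check whether an existing reduction for 2-layer 1-planarity already produces acyclic bipartite instances, and failing that, to adapt a standard 2-layer planarization/crossing reduction so that its gadgets are tree-shaped, using the fact that at $k = 1$ one has considerable local freedom to route crossings within a gadget without relying on cycles. Once this forest hardness is in hand, combining it with Lemma~\ref{lem:reduction} and the $\fvs$ computation above closes the proof.
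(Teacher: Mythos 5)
Your overall skeleton (apply \cref{lem:reduction}, bound $\fvs(G')$ by deleting $\{u_X,u_Y\}$, note NP-membership) matches the paper, and your computation that $G'-\{u_X,u_Y\}$ is $G$ plus pendants and isolated vertices, hence a forest whenever $G$ is, is correct. But the step you yourself flag as ``the principal obstacle''---NP-hardness of \TwoSidedkPlanarity{} at $k=1$ on forests---is a genuine gap, and your plan for closing it (look for an existing reduction, or adapt one so its gadgets are trees) is not an argument. No such hardness result is invoked in the paper, and it is far from clear that 2-layer \emph{1}-planarity is NP-hard on trees at all: the known hardness on trees (the reduction from \Bandwidth{} sketched after the theorem) produces instances whose target crossing bound $k=(\ell+4)(b-1)/2$ grows with the input, so it cannot simply be specialized to $k=1$.

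The paper sidesteps this entirely using \cref{lem:subdivision}, which you do not use. The actual route is: take the known NP-hardness of \TwoSidedkPlanarity{} on trees for \emph{general} $k$, apply \cref{lem:reduction} to obtain an equivalent instance $\langle G',k\rangle$ of \kPTest{} with $\fvs(G')\le 2$, and then subdivide every edge of $G'$ exactly $k-1$ times. By \cref{lem:subdivision} the resulting graph $G'_k$ is $1$-planar if and only if $G'$ is $k$-planar, and subdivision does not increase the feedback vertex set number, so $\fvs(G'_k)\le 2$ as well. This yields NP-hardness of \OPTest{} on graphs with $\fvs\le 2$ without ever needing the source problem to be hard at $k=1$. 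Without this subdivision step (or a genuinely new hardness proof for 2-layer 1-planarity on forests), your proposal does not close.
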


\begin{proof}
    The NP-membership of \OPTest{} is shown in \cite{KorzhikM13}.
    It is known that \TwoSidedkPlanarity{} is NP-complete even on trees~\cite[Theorem 11]{Recognizing2-LayerandOuterk-PlanarGraphs}.
    Thus, the graph $G'$ constructed in the proof of \cref{lem:reduction} has feedback vertex set number at most~2.
    This in turn implies the claim by \cref{lem:subdivision}.
\end{proof}

The above reduction leads to further consequences.
To see this, we first sketch the reduction used in \cite[Theorem 11]{Recognizing2-LayerandOuterk-PlanarGraphs}.
They performed a polynomial-time reduction from \Bandwidth{} to \TwoSidedkPlanarity{}.
Let $T$ be a tree, $b$ be a positive integer, and $\ell$ be an arbitrary even number with $\ell \ge 2b^2$.
For each edge in $T$, each edge $e \in E(T)$ is subdivided once by introducing a new vertex $w_e$, and we add $\ell$ leaves adjacent to each original vertex in $T$.
They showed that there is an ordering $v_1, \dots, v_{|V(T)|}$ of $V(T)$ such that $|i - j| \le b$ for any edge $\{v_i, v_j\} \in E(T)$ if and only if the obtained graph $T'$ has a 2-layer $k$-planar drawing with $k = (\ell + 4)(b-1)/2$.

\begin{theorem}\label{thm:k-planarity-w-hard-td}
    \kPTest{} is W[1]-hard with respect to treedepth, even on graphs with feedback vertex set number 2.
\end{theorem}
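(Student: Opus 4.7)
The plan is to obtain \cref{thm:k-planarity-w-hard-td} by composing the reduction sketched just above from \Bandwidth{} on trees with \cref{lem:reduction}. I would start from a W[1]-hard parameterization of \Bandwidth{} on trees in which the parameter simultaneously bounds the bandwidth $b$ and the treedepth $\td(T)$ of the input tree $T$. Applying the construction of \cite{Recognizing2-LayerandOuterk-PlanarGraphs} (subdividing each edge once and attaching $\ell = O(b^2)$ leaves to each original vertex) yields an equivalent \TwoSidedkPlanarity{} instance on a tree $T'$. Since subdividing an edge and attaching a pendant each increase treedepth by at most an additive constant, $\td(T') \le \td(T) + O(1)$. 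Feeding $T'$ into \cref{lem:reduction} then produces an equivalent \kPTest{} instance $G'$ with $k = (\ell + 4)(b-1)/2$.

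The central technical step is the bound $\td(G') \le \td(T') + O(1)$, which I would establish by exhibiting an explicit elimination forest of $G'$. Take $u_Y$ as the root, place $u_X$ as its unique child, and below $u_X$ embed an optimal elimination forest of $T'$ of height $\td(T')$. Each spoke has a middle vertex $w$ of degree $2$, and both of its neighbors already lie on a single ancestor chain in the forest constructed so far: $u_Y$ and $u_X$ for $u_Y$--$u_X$-spokes, $u_Y$ and the endpoint $y \in Y$ for $u_Y$--$y$-spokes, and $u_X$ and the endpoint $x \in X$ for $u_X$--$x$-spokes. Hence each such $w$ can be attached as a new leaf directly below its deeper neighbor, and all edges of $G'$ become ancestor-descendant pairs. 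The resulting elimination forest has height at most $\td(T') + 3$, so combined with the correctness of the two composed reductions we get a parameterized reduction from the chosen W[1]-hard source to \kPTest{} parameterized by treedepth.

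The main obstacle is the very first step: justifying that \Bandwidth{} on trees is W[1]-hard when the parameter jointly bounds the bandwidth and the treedepth. If no off-the-shelf statement yields exactly this, I would adapt the classical Bodlaender--Fellows--Hallett-style hardness reduction so that the gadget trees are organized into a shallow hierarchy whose depth is bounded by a function of the source parameter, ensuring that the produced tree has treedepth bounded in the parameter as well. Once this source is in place, everything else is a mechanical composition plus the elimination-forest analysis above, and the feedback vertex set number $2$ inherited from \cref{lem:reduction} is automatic.
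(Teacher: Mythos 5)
Your proposal follows essentially the same route as the paper: compose the \Bandwidth{}-to-\TwoSidedkPlanarity{} reduction of \cite{Recognizing2-LayerandOuterk-PlanarGraphs} with \cref{lem:reduction}, verifying that each step increases treedepth by only an additive constant (the paper likewise converts an elimination forest of $T$ into one of $T'$ and notes that \cref{lem:reduction} adds at most $3$ to the treedepth). The one obstacle you flag is not actually one: the paper simply cites that \Bandwidth{} on trees is already known to be W[1]-hard parameterized by treedepth~\cite{DBLP:journals/tcs/GimaHKKO22}, and you do not need the parameter to also bound the bandwidth $b$, since the target parameter is treedepth alone and $k=(\ell+4)(b-1)/2$ is part of the instance and may grow unboundedly.
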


\begin{proof}
    It is known that the problem \Bandwidth{} is W[1]-hard with respect to treedepth, even on trees~\cite{DBLP:journals/tcs/GimaHKKO22}.
    The above reduction we sketched~\cite[Theorem 11]{Recognizing2-LayerandOuterk-PlanarGraphs} also preserves the boundedness of treedepth.
    To see this, we convert an elimination forest $F^*$ of $T$ to that of $T'$ as follows.
    Starting from $F^*$, we add the $\ell$ leaves attached to each vertex as leaf children of it.
    For each edge $e \in E(T)$, one of the end vertices, say $v$, is a descendant of the other in $F^*$.
    We then add $w_e$ as a leaf child of $v$ in $F^*$.
    It is easy to verify that the rooted forest obtained in this way is an elimination forest of $T'$ and its height increases by at most~1 compared to the original elimination forest of $T$, yielding that $\td(T') \le \td(T) + 1$.
    Hence, \TwoSidedkPlanarity{} is also W[1]-hard with respect to treedepth, even on trees.
    Moreover, the reduction used in \cref{lem:reduction} increases the treedepth by at most $3$, which implies the claim.
\end{proof}

\begin{lemma}\label{lem:two-sided-inapproximability}
    Unless P $=$ NP, for any constant $c \geq 1$, there is no polynomial-time $c$-approximation algorithm for \TwoSidedkPlanarity{} on trees.
\end{lemma}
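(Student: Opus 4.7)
The plan is to chain the reduction from \Bandwidth{} to \TwoSidedkPlanarity{} sketched just above the statement with the classical result of Dubey, Feige, and Unger that \Bandwidth{} admits no constant-factor polynomial-time approximation on trees (in fact, on caterpillars) unless P $=$ NP. The sketched reduction produces a tree $T'$ for which $\bw(T) \le b$ if and only if $T'$ has a 2-layer $k$-planar drawing with $k = (\ell+4)(b-1)/2$, which is an affine relationship between $\bw(T)$ and the 2-layer local crossing number of $T'$. Consequently, any constant multiplicative gap in bandwidth transfers, up to a lower-order correction, to essentially the same gap in 2-layer local crossing number, which is exactly what is needed for inapproximability.

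Concretely, I would fix $c \ge 1$ and suppose for contradiction that a polynomial-time $c$-approximation for \TwoSidedkPlanarity{} on trees exists. Picking an integer $c' > c$ (say $c' = \lceil 2c \rceil$), the tree-bandwidth inapproximability supplies input pairs $(T, b)$ with $b \ge 2$ for which it is NP-hard to distinguish $\bw(T) \le b$ from $\bw(T) > c' b$. I then apply the sketched reduction with $\ell := 2(c'b)^2$, which is an even positive integer satisfying $\ell \ge 2(b')^2$ for every $b' \in \{b, c'b\}$; this yields a tree $T'$ in polynomial time. Invoking the ``iff'' separately at $b' = b$ and at $b' = c'b$ gives: in the yes-case the 2-layer local crossing number of $T'$ is at most $K_L := (\ell+4)(b-1)/2$, whereas in the no-case it strictly exceeds $K_R := (\ell+4)(c'b-1)/2$. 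A short calculation yields $K_R/K_L = (c'b-1)/(b-1) > c' > c$ for $b \ge 2$, so running the hypothetical $c$-approximation on $T'$ and thresholding the returned value at $c K_L$ distinguishes the two cases in polynomial time, contradicting the bandwidth hardness.

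The main subtlety I anticipate is that a single output $T'$ of the reduction must support the ``iff'' at both thresholds $b$ and $c'b$ simultaneously; this is precisely what the choice $\ell = 2(c'b)^2$ guarantees, since the hypothesis $\ell \ge 2(b')^2$ of the sketched reduction then holds at both relevant values of $b'$ while keeping $\ell$ polynomial in the input size. The remaining verifications — that $T'$ is a tree (the sketched construction only adds leaves and subdivides edges), that the whole construction is polynomial, and that the gap inequality $K_R/K_L > c$ follows from $c' > c$ together with $b \ge 2$ — are routine.
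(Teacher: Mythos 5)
Your proposal is correct and follows essentially the same route as the paper: compose the constant-factor inapproximability of \Bandwidth{} on trees with the sketched reduction, choosing $\ell$ large enough (even, and at least $2(c'b)^2$) so that the ``iff'' applies at both thresholds, and observe that the affine map $b \mapsto (\ell+4)(b-1)/2$ preserves the multiplicative gap. The only cosmetic difference is your buffer $c' = \lceil 2c\rceil$; the paper works with $c$ directly and the threshold arithmetic still goes through, but your version is equally valid.
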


\begin{proof}
    For any constant $c \ge 1$, it is NP-hard to distinguish the following cases: the bandwidth of $T$ is at most $b$ or more than $cb$ for a given tree $T$ and an integer $b$~\cite{HardnessApproximatingBandwidth}.
    We set $\ell = 2t - 4$ for sufficiently large $t$ and construct a tree $T'$ according to the above reduction from $T$.
    If the bandwidth of $T$ is at most $b$, then we have $\lcr(T') \le t(b - 1)$.
    Otherwise, we have $\lcr(T') \ge tcb$.
    This implies that a polynomial-time $c$-approximation algorithm for \TwoSidedkPlanarity{} on trees distinguishes the above two cases.
\end{proof}

Combining \cref{lem:reduction} and \cref{lem:two-sided-inapproximability}, we obtain the following.

\begin{theorem}\label{thm:kptest:inapprox}
    Unless P $=$ NP, for any constant $c \geq 1$, there is no polynomial-time $c$-approximation algorithm for \kPTest{}, even if the given graph has feedback vertex set number~2.
\end{theorem}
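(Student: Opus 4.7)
The plan is to compose the gap-producing construction inside the proof of \cref{lem:two-sided-inapproximability} with the reduction of \cref{lem:reduction}, thereby obtaining a gap-preserving reduction from a promise version of \Bandwidth{} into the problem of computing $\lcr$ on graphs of feedback vertex set number at most~$2$.

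Concretely, for a fixed constant $c \ge 1$ I would first invoke the gap built into \cref{lem:two-sided-inapproximability}: for a suitably large even $t$, the construction there maps a bandwidth instance $(T, b)$ to a tree $T'$ such that the two-layer local crossing number of $T'$ is at most $t(b-1)$ when $T$ has bandwidth at most $b$, and at least $tcb$ when $T$ has bandwidth more than $cb$; distinguishing these two promise cases is NP-hard. Next, I would apply \cref{lem:reduction} to $T'$ with a single construction parameter $K := tcb$, obtaining a graph $G'$. Because $T'$ is a tree, deleting $\{u_X, u_Y\}$ from $G'$ leaves $T'$ together with the middle vertices of all spokes as degree-one pendants, which is acyclic, so $\fvs(G') \le 2$. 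Combining this with \cref{lem:reduction} yields $\lcr(G') \le t(b-1)$ in the YES case and $\lcr(G') \ge tcb$ in the NO case.

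Suppose, toward contradiction, that a polynomial-time $c$-approximation algorithm $A$ for the local crossing number exists, and run it on $G'$. In the YES case, $A$ returns a value at most $c \cdot t(b-1) = ctb - ct$; in the NO case, $A$ returns a value at least $\lcr(G') \ge tcb$. Since $ctb - ct < tcb$ whenever $c, t \ge 1$, a single threshold comparison separates the two cases, giving a polynomial-time decision procedure for the NP-hard bandwidth gap problem, a contradiction that, together with \cref{lem:subdivision}-style bookkeeping if one wishes to speak in terms of $1$-planarity of subdivisions, also yields the stated feedback-vertex-set bound.

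The only genuine obstacle I foresee is that \cref{lem:reduction} is stated for a fixed~$k$, while the gap argument above uses the same graph $G'$ across a whole range $k' \le K$ of thresholds. Inspecting the proofs of \cref{lem:crossing-free-spokes} and \cref{cor:crossing-free-spokes}, the only place $k$ enters is the counting bound that forces an uncrossed spoke: one needs the number of spokes to strictly exceed $k'|E(G)|$. The choice $\ell_1 = Km + 1$ satisfies this for every $k' \le K$, so the equivalence ``$\lcr(G') \le k'$ iff $T'$ admits a 2-layer $k'$-planar drawing'' holds throughout the needed range, which is exactly what closes the gap argument.
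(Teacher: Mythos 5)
Your proposal is correct and follows essentially the same route as the paper, which proves the theorem simply by combining \cref{lem:reduction} with \cref{lem:two-sided-inapproximability}. Your explicit treatment of the subtlety that the construction of $G'$ depends on $k$ only through the spoke counts (so that building with the larger parameter $K = tcb$ preserves the equivalence for all thresholds $k' \le K$) is a detail the paper leaves implicit, and you resolve it correctly.
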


\subsection{Distance to Path Forest}\label{subsec:distance-to-path-forest}

In this subsection, we show that \OPTest{} is W[1]-hard parameterized by distance to path forest, i.e., the minimum number of vertices required to make the graph into a disjoint union of paths.
The underlying idea of the proof is similar to the one in \cite{GrigorievB07}.
We perform a reduction from \UnaryBinPacking{}.
An instance $I$ of \UnaryBinPacking{} consists of a (multi)set of positive integers $S$ and positive integers $B$ and $b$ such that $\sum_{x \in S} x = bB$ and $B = O(\mathrm{poly}(|S|))$.
Then the goal is to determine if it is possible to partition $S$ into $b$ sets so that the sum of each set is exactly $B$.
\UnaryBinPacking{} is NP-complete when $b$ is given as input~\cite{GareyJ83} and W[1]-hard when parameterized by $b$~\cite{JansenKMS13}.

\begin{theorem}\label{thm:1-planarity-w1-hard-distance-to-path-forest}
    \OPTest{} is W[1]-hard with respect to distance to path forest.
\end{theorem}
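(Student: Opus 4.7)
The plan is to give a parameterized reduction from \UnaryBinPacking{} parameterized by the number of bins $b$, which is W[1]-hard by \cite{JansenKMS13}. Given an instance $(S,B,b)$ with $\sum_{x \in S} x = bB$, I will build a graph $G$ and a set $D \subseteq V(G)$ with $|D| = O(b)$ such that $G$ is $1$-planar iff the packing is feasible and $G - D$ is a disjoint union of paths; this shows that \OPTest{} is W[1]-hard with respect to distance to path forest.

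The construction has two layers. The first layer is a rigid \emph{skeleton} on $O(b)$ \emph{hub} vertices whose role is to carve out $b$ ``bin regions'' in every $1$-planar drawing. I take a central vertex $r$ together with $2b$ anchor hubs $s_1,t_1,\dots,s_b,t_b$, connect $r$ to each anchor and each anchor pair $(s_i,t_i)$ by a bundle of $km+1$ spokes in the sense of \cref{cor:crossing-free-spokes}, and include a long path between $s_i$ and $t_i$ of length linear in $B$. By that corollary, all spoke bundles can be assumed crossing-free in any $1$-planar drawing, which pins down the combinatorial embedding of the skeleton up to isomorphism and produces $b$ faces playing the role of bins. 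The second layer encodes items: for each $x_j \in S$, I attach an \emph{item path} $P_j$ of length linear in $x_j$ whose endpoints are two prescribed anchors shared by all items (so each $P_j$ must lie essentially inside one of the $b$ bin faces). The scaling is calibrated so that the $1$-planar crossing budget inside a single region permits a total item length of at most $B$: each item path must cross a fixed set of internal boundary edges of its region, and the one-crossing-per-edge rule of $1$-planarity then doubles as the capacity bound.

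Correctness goes in two directions. For completeness, any valid assignment of items to bins yields a drawing in which each $P_j$ is routed through the region of its bin; the calibration of lengths ensures every edge is crossed at most once. For soundness, I invoke \cref{cor:crossing-free-spokes} to normalise an arbitrary $1$-planar drawing so that all spokes are crossing-free; this freezes the embedding of the skeleton and forces each item path to lie essentially inside a single region, since any detour across a skeleton edge would exhaust the crossing budget. The per-region capacity then forces the induced item-to-bin assignment to be a valid $B$-partition of $S$. The deletion set $D$ is the set of hubs, namely $r$ together with the $2b$ anchors; $|D| = 2b + 1 = O(b)$. After removing $D$, every spoke middle vertex becomes isolated, every inter-anchor skeleton path becomes a sub-path, and every item path becomes a sub-path, so $G - D$ is indeed a path forest.

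The main obstacle will be the calibration: choosing the lengths of the item paths, the length of each skeleton path between $s_i$ and $t_i$, and the precise attachment of item endpoints so that $1$-planarity enforces the bin capacity exactly, neither under- nor over-counting crossings, and in particular prevents an item path from being ``split'' across two adjacent bin regions via a detour through the skeleton. The spokes machinery from \cref{cor:crossing-free-spokes} is the lever that makes such tight calibration possible, because it lets me treat the skeleton as effectively planar with a fixed embedding while keeping only $O(b)$ hub vertices outside the path-forest part of $G$.
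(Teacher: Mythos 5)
Your high-level plan matches the paper's: a parameterized reduction from \UnaryBinPacking{} parameterized by $b$, a spoke-based skeleton normalized via \cref{cor:crossing-free-spokes} that carves the plane into $b$ bin regions, and a deletion set of $O(b)$ hubs leaving a path forest. However, there is a genuine gap in the step you yourself flag as ``calibration'': the mechanism by which an item path of length proportional to $x_j$ consumes $x_j$ units of a region's crossing budget. As described, each item path is attached to the skeleton only at its two endpoints (two anchors shared by all items). A topological curve from one anchor to another, drawn inside a bin face, crosses the face's internal boundary path only as many times as topology forces it to --- which is at most a constant per item, independent of the path's combinatorial length. Nothing prevents you from drawing an arbitrarily long item path meandering inside a single region with $O(1)$ crossings, so all items could be packed into one bin and soundness fails. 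This is not a matter of tuning lengths; no calibration of path lengths alone can make ``length'' translate into ``forced crossings.''

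The missing idea is a \emph{dense attachment} gadget: in the paper's construction, \emph{every} vertex $w$ of every item path is made adjacent to both hubs $u_1$ and $u_2$. The two hubs sit at opposite poles of each bin region $R_i$, and the capacity path (the $B$-edge arc of the cycle $C$ between $v_i$ and $v_{i+1}$) separates them inside $R_i$; hence each vertex $w$ of an item path placed in $R_i$ spawns a $2$-path $u_1$--$w$--$u_2$ that must cross that arc, and these $2$-paths are edge-disjoint across the $x_j$ vertices. One-planarity caps the crossings on the $B$-edge arc at $B$, which is exactly the bin capacity. Your construction needs this (or an equivalent device) for the item size to be counted at all; with it, the item ``paths'' still leave a path forest after deleting the $O(b)$ hubs, so the parameter bound survives. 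As a secondary point, with only endpoint attachment your argument that an item path cannot straddle two regions is also weaker than you suggest: the clean version is that region boundaries consist entirely of crossing-free spokes, so no edge of an item path may cross them at all, rather than an appeal to ``exhausting the crossing budget.''
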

\begin{proof}
    Let $\langle S = \{x_1, \dots, x_{|S|}\}, B, b \rangle$ be an instance of \UnaryBinPacking{}.
    In the following, we assume that $b \ge 3$.
    We construct an instance $\langle G = (V, E) \rangle$ of \OPTest{} as follows.
    Let $G$ be a graph consisting of vertices $u_1, u_2, v_1, \dots, v_{b}$, and paths $P_1, \dots, P_{|S|}$ with $|V(P_i)| = x_i$ for $1 \le i \le |S|$.
    Let $\mathcal P = \{P_1, \dots, P_{|S|}\}$.
    We add a path of length~$B$ between $v_i$ and $v_{i+1}$ for each $1 \leq i \leq b$, where $v_{b+1} \coloneqq v_1$.
    These paths form a cycle $C$ of $bB$ vertices, passing through $v_i$ for all~$i$.
    Next, we add an edge between $u_i$ and each vertex in $P_j$ for $1 \le i \le 2$ and $1 \le j \le |S|$.
    Let $m$ be the number of edges in the graph constructed so far (namely, $m = 4bB - |S|$).
    Finally, for $1 \le i \le b$, we add $\ell_1 \coloneqq m + 1$ spokes between $u_1$ and $v_i$ and add $\ell_2 \coloneqq (\ell_1 b + m) + 1$ spokes between $u_2$ and $v_i$.
    See \cref{fig:reduction:distance-to-path-forest:image} for an illustration of the constructed graph $G$.
    Observe that, by construction, removing $b+2$ vertices, say $u_1, u_2, v_1, \dots, v_b$, yields a path forest.
    The construction can be done in polynomial time.
    \begin{figure}
        \centering
        \includegraphics[page=1]{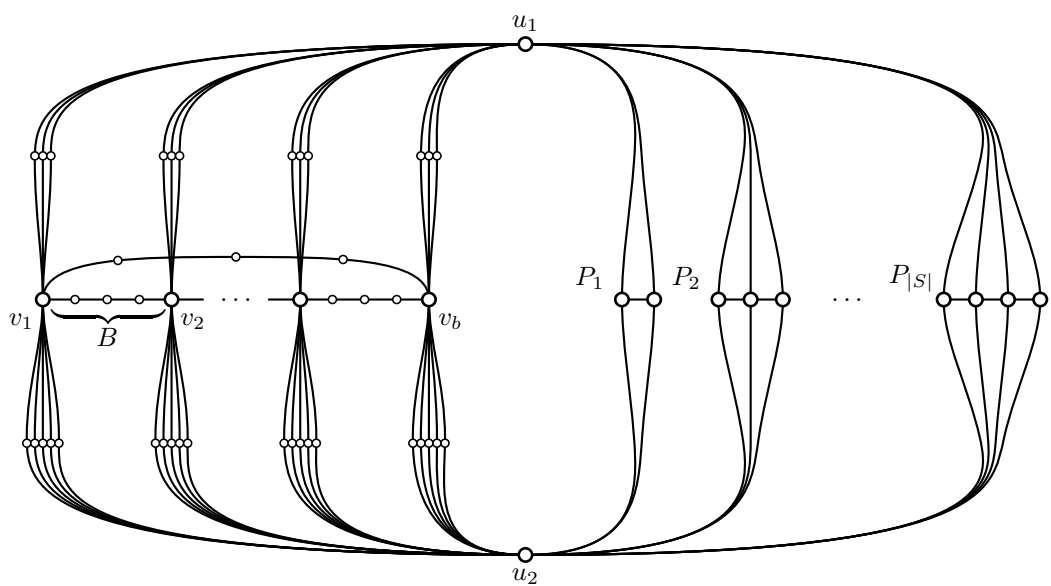}
        \caption{An illustration of the reduction for \cref{thm:1-planarity-w1-hard-distance-to-path-forest}. Note that the number of depicted spokes between $u_i$ and $v_j$ is not accurate for aesthetic purposes.}
        \label{fig:reduction:distance-to-path-forest:image}
    \end{figure}

    From a solution of \UnaryBinPacking{}, we can obtain a $1$-planar drawing of $G$ easily, as in \cref{fig:reduction:distance-to-path-forest:1-planar-drawing}.
    \begin{figure}
        \centering
        \includegraphics[page=2]{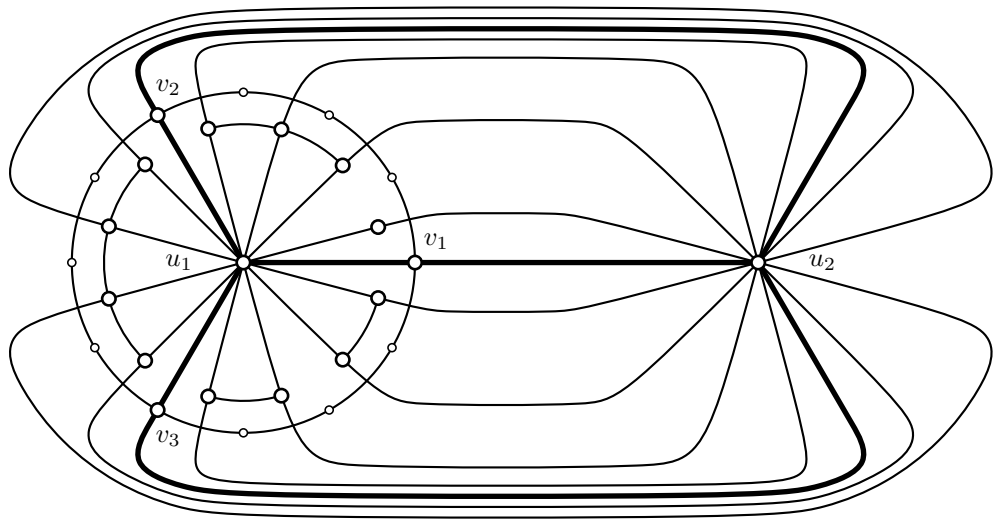}
        \caption{A 1-planar drawing of the graph constructed from an instance $\langle S = \{1, 2, 2, 3, 4\}, B = 4, b = 3 \rangle$ of \UnaryBinPacking{}. Each thick edge represents the spokes between its endpoints.}
        \label{fig:reduction:distance-to-path-forest:1-planar-drawing}
    \end{figure}
    To be more precise, let $\{S_1, \dots, S_b\}$ be a solution for \UnaryBinPacking{}.
    We first draw the cycle $C$ as a circle in such a way that the spokes between $u_1$ and $v_i$ partition the interior of the circle into $b$ regions $R_1, \dots, R_b$, where $R_i$ is enclosed by a spoke between $u_1$ and $v_i$, a spoke between $u_1$ and $v_{i + 1}$, and a path between $v_i$ and $v_{i + 1}$ on $C$.
    For each region $R_i$, we draw a path $P_j \in \mathcal P$ inside $R_i$ if $x_j \in S_i$.
    Since the sum of the numbers of vertices of the paths drawn in $R_j$ is exactly $B$, we can draw those paths so that each edge in the path between $v_i$ and $v_{i+1}$ is crossed exactly once.
    Hence, we can draw $G$ so that each edge is crossed at most once.

    For the other direction, suppose that $G$ has a 1-planar drawing.
    By \cref{cor:crossing-free-spokes}, there exists a 1-planar drawing $\Gamma$ of the graph $G$ such that all the spokes in $G$ have no crossings.
    Since the spokes have no crossings in $\Gamma$, the subdrawing induced by them is planar, and hence there are two vertices $v_i$ and $v_j$ that belong to the outer cycle of this subdrawing.
    As $b \ge 3$, all the other $v_{\ell}$'s are drawn inside the region bounded by this outer cycle.
    Moreover, since two consecutive $v_\ell$ and $v_{\ell + 1}$ are connected by a path in $C$, the two distinguished vertices $v_i$ and $v_j$ must be consecutive, namely $j = (i + 1) \bmod b$.
    By appropriately renaming those vertices, we can assume that $G - \bigcup_{1 \le i \le |S|} V(P_i)$ are drawn as in \cref{fig:reduction:distance-to-path-forest:skelton}.
    Then, there are $b$ internally vertex-disjoint paths $Q_1, \dots, Q_b$ between $u_1$ and $u_2$ such that $Q_i$ is a concatenation of spokes between $u_1$ and $v_i$ and between $v_i$ and $u_2$ for each $i$.
    These paths separate the plane into $b$ regions, $R_1, \dots, R_b$ as \cref{fig:reduction:distance-to-path-forest:regions}.
    \begin{figure}
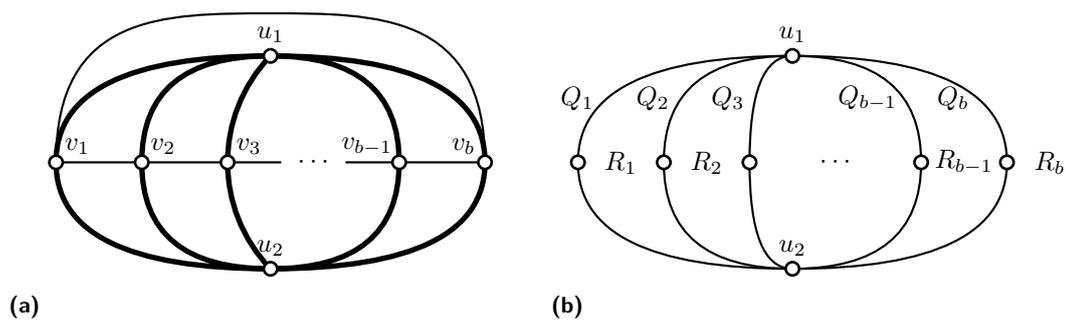

      \begin{subfigure}[b]{.49\textwidth}
        \centering
        \includegraphics[page=6]{figure/reduction.pdf}
        \subcaption{}
        \label{fig:reduction:distance-to-path-forest:skelton}
      \end{subfigure}
      \hfill
      \begin{subfigure}[b]{.49\textwidth}
        \centering
        \includegraphics[page=7]{figure/reduction.pdf}
        \subcaption{}
        \label{fig:reduction:distance-to-path-forest:regions}
      \end{subfigure}
      \caption{The subgraphs induced by (a)~cycle~$C$ and spokes (boldlines) and (b)~the paths $Q_1, \dots, Q_b$, which defines $b$ regions $R_1, \dots, R_b$.}
    \end{figure}
    As each path $P_i$ cannot lie in two different regions, $\mathcal P$ can be partitioned into $\{\mathcal{P}_1, \dots, \mathcal{P}_b\}$, where $\mathcal{P}_i \subseteq \mathcal P$ is the collection of paths drawn inside $R_i$.
    Observe that, for each path $P_j \in \mathcal{P}_i$, there are $x_i = |V(P_i)|$ edge-disjoint paths between $u_1$ and $u_2$ that are drawn inside $R_i$.
    Each of them must cross the path between $v_i$ and $v_{i+1}$ on $C$, implying that there are at most $B$ such paths inside $R_i$ as $\Gamma$ is 1-planar.
    Since there are exactly $b$ regions, each $\mathcal P_i$ must satisfy $\sum_{P \in \mathcal P_i} |V(P)| = B$.
    Therefore, we can partition $S$ into $b$ sets such that the sum of each set is exactly $B$.
\end{proof}

Observe that the above reduction still works even if we add an edge between $u_1$ and each vertex in $C$, in which case the vertex set $\{u_1, u_2\}$ forms a dominating set.

\begin{corollary}\label{cor:1-planarity-npc-domination-number-2}
    \OPTest{} is NP-complete even if the domination number is 2.
\end{corollary}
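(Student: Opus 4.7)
The strategy is to invoke the reduction of \cref{thm:1-planarity-w1-hard-distance-to-path-forest} after one modification: insert an edge between $u_1$ and every vertex of the cycle $C$, and call the resulting graph $G^+$. That $G^+$ has domination number exactly $2$ is direct: every vertex outside $\{u_1,u_2\}$ is either a cycle vertex (adjacent to $u_1$ via a new edge), a vertex of some $P_j$ (adjacent to both $u_1$ and $u_2$), or the middle vertex of a spoke between $u_1$ (resp.\ $u_2$) and some $v_i$ (adjacent to $u_1$, resp.\ $u_2$); hence $\{u_1,u_2\}$ dominates, and since $u_1 u_2 \notin E(G^+)$ and the spoke midpoints are each adjacent to only one of the two, no single vertex dominates.

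The ``only if'' direction of the equivalence between yes-instances of \UnaryBinPacking{} and $1$-planarity of $G^+$ is immediate: restricting a $1$-planar drawing of $G^+$ to $G$ by erasing the new edges yields a $1$-planar drawing of $G$, and \cref{thm:1-planarity-w1-hard-distance-to-path-forest} applies. For the ``if'' direction I would extend the drawing in \cref{fig:reduction:distance-to-path-forest:1-planar-drawing}. In that drawing $u_1$ sits at the centre of the disk bounded by $C$, the spokes partition the disk into sectors $R_1,\dots,R_b$, and the paths in $\mathcal P_i$ live inside $R_i$ with their vertices joined to $u_1$ by edges inside $R_i$ and to $u_2$ by edges that each cross the arc of $C$ between $v_i$ and $v_{i+1}$ exactly once, saturating its crossing budget. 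In each sector I would arrange the $P_j$-vertices so that their angular ordering around $u_1$ respects the path order of each $P_j$, and insert each new edge $u_1 w$ into an angular slot between two consecutive edges $u_1 p$ and $u_1 p'$ with $p,p'$ consecutive on some $P_j$; such an edge then crosses only the single path edge $p p'$, which was uncrossed in the original drawing, so no edge ends up with more than one crossing.

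The main point to verify is the counting: each sector $R_i$ contains $B-1$ interior cycle vertices and $\sum_{P_j \in \mathcal P_i}(x_j - 1) = B - |\mathcal P_i|$ path edges, so $|\mathcal P_i|-1$ of the new edges are ``leftover''; these are placed into angular slots between distinct paths $P_j$, where no path edge is available to cross, and therefore contribute no crossings at all. Combined with the observation that no new edge touches the arc of $C$ (and hence cannot overload the already-saturated arc edges), this produces a $1$-planar drawing of $G^+$. NP-membership of \OPTest{} is due to \cite{KorzhikM13}, and the construction is trivially polynomial, so the corollary follows.
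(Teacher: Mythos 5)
Your proposal is correct and matches the paper's approach exactly: the paper establishes this corollary with a one-sentence observation that the reduction of \cref{thm:1-planarity-w1-hard-distance-to-path-forest} still works after adding an edge from $u_1$ to every vertex of $C$, making $\{u_1,u_2\}$ a dominating set. Your slot-counting argument for rerouting the new edges $u_1 w$ through the sectors simply supplies the drawing details the paper leaves implicit.
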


As mentioned in \cref{sec:intro}, Cabello and Mohar~\cite{CabelloM13} showed that \OPTest{} is NP-hard even on near-planar graphs.
\cref{thm:1-planarity-w1-hard-distance-to-path-forest} strengthens their hardness result as the graph $G$ constructed in the reduction is a near-planar graph with the following properties.

\begin{corollary}\label{cor:hardness:near-planar}
    \OPTest{} is NP-complete on near-planar graphs with feedback vertex set number at most~3 and pathwidth at most~4.
\end{corollary}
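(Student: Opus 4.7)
The plan is to revisit the graph $G$ constructed in the proof of \cref{thm:1-planarity-w1-hard-distance-to-path-forest} (the reduction from \UnaryBinPacking{}) and verify that it has all three claimed structural properties. Since \UnaryBinPacking{} is NP-complete when $b$ is given as input and the reduction runs in polynomial time, the NP-hardness half of the corollary then follows, and NP-membership of \OPTest{} is known~\cite{KorzhikM13}.

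For the feedback vertex set bound I would take $X = \{u_1, u_2, v_1\}$: deleting $u_1$ and $u_2$ turns every spoke (a path of length~$2$) into either a pendant edge attached to some $v_i$ or an isolated vertex, destroying every spoke-induced cycle; deleting $v_1$ then breaks the cycle $C$ into a path, and the paths $P_j$ were acyclic to begin with. Hence $G - X$ is a forest and $\fvs(G) \le 3$. For pathwidth, I would use the same $X$ and note that $G - X$ is the disjoint union of (i)~a caterpillar consisting of the path $C - v_1$ with the remaining spoke middle vertices attached as leaves at each surviving $v_i$, (ii)~the paths $P_j$, and (iii)~some isolated spoke middle vertices. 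Such a graph has pathwidth at most~$1$, so adding the three vertices of $X$ to every bag of a width-$1$ path decomposition of $G - X$ yields a path decomposition of $G$ of width at most~$4$.

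The main obstacle is near-planarity, for which my plan is to exhibit a single edge $e$ of $C$ whose removal admits a planar drawing. I would take $e$ to be an edge of the $v_b$-to-$v_1$ subpath of $C$ incident to $v_1$, so that $C - e$ becomes a single path $P_C$ listing the vertices of $C$ in cyclic order starting from $v_1$. Then I would place the vertices of $P_C$ on a horizontal line in this order, followed by the vertices of the paths $P_1, \dots, P_{|S|}$ (each $P_j$ as a consecutive block); draw $u_1$ above the line and $u_2$ below; and place the spoke middle vertices between $u_1$ and $v_i$ (resp.\ between $u_2$ and $v_i$) just above (resp.\ below) $v_i$. The path edges of $P_C$ and of each $P_j$ then become line segments, while the edges incident to $u_1$ fan out in the upper half-plane to the spoke middle vertices and directly to the $P_j$-vertices, and symmetrically for $u_2$ below. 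The technically fiddly step is realizing the parallel spokes between $u_1$ and each $v_i$ (and between $u_2$ and $v_i$) together with the direct edges from $u_i$ to the $P_j$-vertices without crossings; I would do this by nesting the spoke-paths of each cluster inside one another from $u_i$ down to $v_i$, after which planarity of the upper (resp.\ lower) half is a routine star-like check. This produces a planar embedding of $G - e$ and hence shows $G$ is near-planar.
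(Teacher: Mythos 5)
Your proposal is correct and follows exactly the route the paper intends: the corollary is obtained by inspecting the graph $G$ from the reduction in \cref{thm:1-planarity-w1-hard-distance-to-path-forest}, and your verifications (the deletion set $\{u_1,u_2,v_1\}$ giving $\fvs(G)\le 3$ and a width-$1$ decomposition of the remaining caterpillar-plus-paths, hence pathwidth at most $4$, together with the planar drawing of $G$ minus one cycle edge with $u_1$ above and $u_2$ below the spine) are sound and in fact more detailed than what the paper records. No gaps.
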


\subsection{Twin Cover Number}

In this subsection, we show that \kPTest{} is W[1]-hard with respect to twin cover number, which complements the positive result of the ``$+k$ setting'' to be presented in the subsequent section.

We give a similar reduction from \UnaryBinPacking{} as in \cref{subsec:distance-to-path-forest}.
However, in order to substitute cliques for paths, we first show the W[1]-hardness of a restricted version of \UnaryBinPacking, where integers in $S$ are sufficiently small that the constructed graph admits a $B$-planar drawing.

\begin{restatable}[\restateref{lem:UBP:moderate-size-item}]{lemma}{LemUBPModerateSizeItem}
\label{lem:UBP:moderate-size-item}
\UnaryBinPacking{} is W[1]-hard parameterized by the number $b$ of bins, even if each integer $x \in S$ satisfies $x \leq \sqrt{B} - 1$.
\end{restatable}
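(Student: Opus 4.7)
The plan is to reduce from the general \UnaryBinPacking{} problem, which is W[1]-hard parameterized by $b$~\cite{JansenKMS13}. Given an arbitrary instance $\langle S, B, b\rangle$, I will \emph{inflate} every item by adding a large constant $C > B$, and simultaneously pad $S$ with dummies so that the number of items becomes a multiple of $b$. This forces any feasible packing to be perfectly balanced across bins and, by taking the new capacity large enough, makes every item small relative to the square root of the new capacity.

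Concretely, pad $S$ with size-$0$ dummy items to obtain $n' = pb$ items, for a parameter $p$ to be specified (minor modifications handle the case when zero-size items are disallowed by convention). Set $C = B+1$ and define the new bin capacity $B' = pC + B$; every item $x \in S$ (original or dummy) is replaced by $x + C$. The number of bins is unchanged, so the parameter $b$ is preserved, and the construction is clearly polynomial in the unary-encoded input.

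For the equivalence, observe that the total size of the inflated items equals $bB + n'C = bB'$, so every feasible packing must fill every bin to exactly $B'$. Therefore, for any packing $T_1, \dots, T_b$, one has $(|T_j| - p)C = B - \sum_{i\in T_j} x_i \leq B < C$, and since $|T_j|$ is an integer this forces $|T_j| \leq p$. Together with $\sum_j |T_j| = pb$, this yields $|T_j| = p$ for every $j$, whence $\sum_{i\in T_j} x_i = B$, producing a solution of the original instance; the converse direction is immediate.

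Finally, I need to choose $p$ so that the maximum inflated item $B + C \leq 2B+1$ satisfies $B + C \leq \sqrt{B'} - 1$, which reduces to $(2B+2)^2 \leq pC + B$ and holds for any $p \geq 4B+4$. Since $|S| \leq bB \leq bp$, the required number of dummies is non-negative, so the construction is always realizable. The main point that requires care is this item-size bookkeeping interacting with the forcing condition $C > B$, but it is a routine calculation once the parameters are set as above.
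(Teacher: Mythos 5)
Your reduction is correct, and it realizes the same underlying idea as the paper's proof --- pad the instance so that the new capacity $B'$ grows to $\Theta(B^2)$ while the equivalence is preserved by a counting argument --- but the padding mechanism is genuinely different. The paper keeps the original items untouched and adds $2bB$ filler items of size $B+1$ with $B' = B + 2B(B+1)$; the forcing step is that each bin can physically hold at most $2B$ fillers, so every bin gets exactly $2B$ of them and a residual capacity of $B$. You instead shift \emph{every} item by $C=B+1$ and pad with dummies to $pb$ items, forcing each bin to contain exactly $p$ items via the exact-fill identity $(|T_j|-p)C = B - \sum_{i\in T_j} x_i \le B < C$; the residual sums are then automatically $B$. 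Your version has the small aesthetic advantage that the forcing is a one-line algebraic consequence of the total-sum constraint rather than a capacity computation, and the converse direction ($|S_j|\le B\le p$, so each part can be topped up with dummies) is indeed immediate; the paper's version has the advantage of leaving the original item sizes intact, which makes the ``strip off the padding'' step slightly more transparent. Two minor points you should make explicit: (i) you implicitly assume $x\le B$ for all $x\in S$ when bounding the maximum inflated item by $B+C$ (this is without loss of generality, since otherwise the instance is a trivial no-instance, and the paper states it explicitly); and (ii) the size-$0$ dummies need no special handling at all, since after inflation they become items of size $C\ge 1$, so the constructed instance consists only of positive integers as required.
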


\begin{proofsketch}
    We add $2bB$ copies of $B+1$ to $S$ of an instance of \UnaryBinPacking{} and increase the capacity of a bin by $2B(B+1)$ to obtain an instance with the claimed condition.
    We can show that there is no other solution than to distribute those $B+1$'s evenly to $b$ bins, making it equivalent to the original instance.
\end{proofsketch}

\begin{theorem}\label{thm:hardness:tcn}
    \kPTest{} is W[1]-hard parameterized by twin cover number.
\end{theorem}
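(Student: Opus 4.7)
The plan is to reduce from the restricted \UnaryBinPacking{} of \cref{lem:UBP:moderate-size-item}, mimicking the reduction of \cref{thm:1-planarity-w1-hard-distance-to-path-forest} but with \emph{cliques} in place of the paths $P_j$ so that the resulting graph has twin cover number $O(b)$. The bound $x_j \le \sqrt{B} - 1$ provided by that lemma is precisely what will keep the internal crossings of each clique within the $k$-planar budget.

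Given an instance $\langle S = \{x_1, \dots, x_{|S|}\}, B, b\rangle$ with $x_j \le \sqrt{B}-1$, we construct $G$ on the vertices $u_1, u_2, v_1, \dots, v_b$ together with a disjoint clique $K^{(j)}$ of size $x_j$ for each $j$. The edges of $G$ are: the single-edge cycle $v_1 v_2 \cdots v_b v_1$ (crucially, unlike in the previous reduction, we do not subdivide these cycle edges); all clique edges; edges from each of $u_1$ and $u_2$ to every vertex in every $K^{(j)}$; and sufficiently many spokes between $u_1$ and each $v_i$, and between $u_2$ and each $v_i$, so that \cref{cor:crossing-free-spokes} applies. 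Set $k := B$.

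Choosing $T := \{u_1, u_2, v_1, \dots, v_b\}$, every middle vertex of a spoke is isolated in $G - T$, and for each clique $K^{(j)}$ all of its vertices share the closed neighborhood $\{u_1, u_2\} \cup V(K^{(j)})$ in $G$ and thus form a single true-twin component of $G - T$. Hence $G$ has twin cover number at most $b + 2$, so W[1]-hardness in $b$ transfers to W[1]-hardness in twin cover number once the equivalence is proved. The equivalence argument would follow the template of \cref{thm:1-planarity-w1-hard-distance-to-path-forest}: a bin packing solution $\{S_1, \dots, S_b\}$ yields a $B$-planar drawing in which $v_1, \dots, v_b$ lie on a circle with $u_1$ inside and $u_2$ outside, each $K^{(j)}$ with $x_j \in S_\ell$ is placed in region $R_\ell$ and drawn with its vertices in convex position (so that each internal edge incurs at most $(x_j - 2)^2/4 \le B/4$ crossings), and the $x_j$ external edges from $K^{(j)}$ to $u_2$ are routed so that each crosses only the boundary edge $v_\ell v_{\ell+1}$, exactly once; by the packing, $v_\ell v_{\ell+1}$ is crossed exactly $B$ times. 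Conversely, from any $B$-planar drawing, \cref{cor:crossing-free-spokes} lets us assume all spokes are crossing-free, which topologically pins each $K^{(j)}$ into some region $R_{\ell(j)}$; each of its $x_j$ external edges to $u_2$ must cross $v_{\ell(j)} v_{\ell(j)+1}$, giving $\sum_{j : K^{(j)} \subseteq R_\ell} x_j \le B$ per region, and the total count $\sum_j x_j = bB$ forces equality in every region, producing a valid bin packing.

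The main obstacle is calibrating the ``capacity'' of a bin using a \emph{single} boundary edge rather than a length-$B$ path: this forces $k = B$, which in turn demands the bound $x_j \le \sqrt{B} - 1$ from \cref{lem:UBP:moderate-size-item} so that the $\Theta(x_j^2)$ internal crossings inside each $K^{(j)}$ safely fit within the budget and do not spill over into the boundary accounting that encodes the packing constraint.
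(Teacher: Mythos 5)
Your proposal matches the paper's proof essentially verbatim: the same reduction from the restricted \UnaryBinPacking{} instance of \cref{lem:UBP:moderate-size-item}, with cliques replacing the paths, an unsubdivided $b$-cycle whose single edges serve as capacity-$B$ gates, spokes to invoke \cref{cor:crossing-free-spokes}, the bound $x_j \le \sqrt{B}-1$ to keep each clique's internal and $u_1/u_2$-incident crossings below $B$, and the twin cover $\{u_1,u_2,v_1,\dots,v_b\}$ of size $b+2$. The only cosmetic difference is that the paper counts, for each clique vertex $w$, the path $u_1\text{--}w\text{--}u_2$ crossing the cycle (rather than insisting it is the edge to $u_2$ that crosses), which is the cleaner way to phrase the per-region capacity bound you state.
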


\begin{proof}
    The reduction is almost analogous to the one used in \cref{thm:1-planarity-w1-hard-distance-to-path-forest}.
    Let $I = \langle S, B, b \rangle$ be an instance of \UnaryBinPacking{}.
    We can assume that $b \ge 3$ and $x \le \sqrt{B} - 1$ for all $x \in S$ due to \cref{lem:UBP:moderate-size-item}.
    While the basic construction of the graph $G$ is the same as in \cref{thm:1-planarity-w1-hard-distance-to-path-forest}, it differs in the following points:
    \begin{itemize}
        \item the cycle $C$ contains exactly $b$ vertices $v_1, \dots, v_b$;
        \item for each $x_i \in S$, $G$ contains a clique $C_i$ with $x_i$ vertices, instead of a path $P_i$, such that each vertex in the clique is adjacent to both $u_1$ and $u_2$;
        \item for each $v_i$, we add $\ell_1 \coloneqq Bm + 1$ spokes between $u_1$ and $v_i$ and add $\ell_2 \coloneqq B(b\ell_1 + m) + 1$ spokes between $u_2$ and $v_2$, where $m$ is the number of edges in $G$ that are not contained in the spokes.
    \end{itemize}
    When we draw $G$ as in \cref{fig:reduction:distance-to-path-forest:1-planar-drawing}, edges incident to a vertex in clique~$C_i$ may have crossings.
    Each edge $e$ in $C_i$ may cross other edges in $C_i$ and all the edges between $u_j$ and a vertex in $C_i$ for $1 \le j \le 2$.
    This implies that the number of crossings that $e$ involves is at most
    \begin{align*}
        |E(C_i)| - 1 + 2|V(C_i)| \le (\sqrt{B}-1)(\sqrt{B} - 2)/2 - 1 + 2(\sqrt{B} - 1) < B.
    \end{align*}
    Similarly, we can conclude that each edge between $u_j$ and a vertex in $C_i$ has at most $B - 1$ crossings.
    Since we can assume that all the spokes in $G$ have no crossings due to \cref{cor:crossing-free-spokes}, these spokes define exactly $b$ regions, each of which contains exactly $B$ vertices of cliques.
    Thus, we can show that $I$ is a yes-instance if and only if $G$ is $B$-planar as well.

    As each connected component in $G - (V(C) \cup \{u_1, u_2\})$ consists of true twins, $G$ has a twin cover of size at most $b + 2$.
    This completes the proof.
\end{proof}

\section{FPT Algorithms} \label{sec:fpt}
In contrast to the previous section, we mainly focus on the algorithmic aspects of \kPTest{} in this section and show that \kPTest{} is FPT when several well-known graph parameters (and $k$) are given as parameters.

\subsection{Reducing to \OPTest{}}
There are several consequences from \cref{lem:subdivision}, combining known FPT algorithms of \OPTest{} of \cite{BannisterCE18}.

Let $G$ be a graph and let $k$ be an integer.
Let $G_k$ be the graph obtained from $G$ by subdividing each edge $k - 1$ times.
Clearly, we have $\fes(G) = \fes(G_k)$.
By~\cref{lem:subdivision}, $G$ is $k$-planar if and only if $G_k$ is 1-planar.
Since \OPTest{} is FPT parameterized by feedback edge set number~\cite{BannisterCE18}, the following theorem holds.

\begin{theorem}\label{thm:fpt:fes}
    \kPTest{} is FPT parameterized by feedback edge set number.
\end{theorem}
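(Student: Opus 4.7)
The plan is to reduce \kPTest{} to \OPTest{} by edge subdivision. Given an input graph $G$ and integer $k$, I would construct $G_k$ by subdividing each edge of $G$ exactly $k-1$ times, and then invoke the FPT algorithm for \OPTest{} parameterized by feedback edge set number due to Bannister, Cabello, and Eppstein~\cite{BannisterCE18} on $G_k$. Correctness of the reduction is immediate from \cref{lem:subdivision}, which equates $\lcr(G) \le k$ with $\lcr(G_k) \le 1$.

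For the running time to be FPT in $\fes(G)$, two auxiliary facts are needed. First, the feedback edge set number is invariant under subdivision, i.e.\ $\fes(G_k) = \fes(G)$. This holds because $\fes$ of a graph coincides with its cyclomatic number $|E| - |V| + c$, where $c$ is the number of connected components, and subdividing an edge increments both $|V|$ and $|E|$ by the same amount while preserving $c$. Concretely, a feedback edge set $F \subseteq E(G)$ of minimum size lifts to a feedback edge set of $G_k$ of the same cardinality by choosing, for each $e \in F$, one of the $k$ edges that replace $e$ in $G_k$. Second, the instance fed to the black-box algorithm must have polynomial size. To ensure this, I would first dispose of the trivial regime $k \ge |E(G)|$, in which every drawing is automatically $k$-planar, and answer \textbf{yes} outright. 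In the remaining case $k < |E(G)| \le \binom{n}{2}$, the subdivided graph $G_k$ has $O(n^3)$ vertices, so the algorithm of~\cite{BannisterCE18} runs in time $f(\fes(G_k)) \cdot |V(G_k)|^{O(1)} = f(\fes(G)) \cdot \mathrm{poly}(n)$ for some computable $f$.

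I do not anticipate a genuine obstacle: once \cref{lem:subdivision} and the result of~\cite{BannisterCE18} are at hand, the only points requiring care are the invariance of $\fes$ under subdivision and the trivial upper bound on $k$ that keeps the reduction polynomial in the input size.
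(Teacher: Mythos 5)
Your proposal is correct and follows essentially the same route as the paper: subdivide each edge $k-1$ times, observe that the feedback edge set number is preserved, and apply \cref{lem:subdivision} together with the FPT algorithm for \OPTest{} of~\cite{BannisterCE18}. Your extra care about the trivial regime $k \ge |E(G)|$ (to keep $G_k$ of polynomial size) is a sensible detail that the paper leaves implicit.
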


We next observe that the treedepth of $G_k$ is not much larger than the treedepth of $G$.

\begin{restatable}[\restateref{lem:fpt:td-Gk}]{lemma}{LemFptTdGk}
\label{lem:fpt:td-Gk}
    For an integer $k \ge 0$, it holds that $\td(G_k) \le \td(G) + \ceils{\log_2 k}$.
\end{restatable}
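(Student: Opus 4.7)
The plan is to start with an optimal elimination forest $F$ of $G$ of height $\td(G)$, and for each edge of $G$ graft onto $F$ a balanced elimination tree taking care of the new subdivision vertices appearing in $G_k$.

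Concretely, I would fix such an $F$ and consider an arbitrary edge $e=\{u,v\}$ of $G$; by definition of elimination forest, one endpoint (say $v$) is a descendant of the other ($u$) in $F$. Subdivision replaces $e$ by the path $u,w_1^e,w_2^e,\ldots,w_{k-1}^e,v$ in $G_k$, whose $k-1$ internal vertices induce a copy of $P_{k-1}$. Using the standard identity $\td(P_{k-1})=\lceil\log_2 k\rceil$, I would choose a rooted elimination tree $T^e$ of $P_{k-1}$ of height exactly $\lceil\log_2 k\rceil$. Then I would define $F'$ by taking $F$ and, for each edge $e$, attaching $T^e$ as a subtree whose root becomes a fresh child of the deeper endpoint $v$. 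Since distinct edges of $G$ introduce disjoint sets of new vertices, the subtrees $T^e$ can be attached independently without interfering with one another.

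Next I would verify that $F'$ is an elimination forest of $G_k$. The edges of $G_k$ arising from $e$ are $\{u,w_1^e\}$, the internal edges $\{w_i^e,w_{i+1}^e\}$ for $1\le i\le k-2$, and $\{w_{k-1}^e,v\}$. The internal edges are witnessed by $T^e$ itself. Every $w_j^e$ is a descendant of $v$ in $F'$, so $v$ is an ancestor of $w_{k-1}^e$, and its ancestor $u$ in $F$ is likewise an ancestor of $w_1^e$, accounting for both boundary edges. Original edges of $G$ are handled exactly as in $F$, since $F \subseteq F'$ on the original vertices.

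For the height bound, every new vertex $w_j^e$ lies at depth at most $(\text{depth of } v \text{ in } F) + \lceil\log_2 k\rceil \le \td(G) + \lceil\log_2 k\rceil$, while original vertices retain their depth in $F \subseteq F'$. I do not expect a serious obstacle; the only point requiring care is the decision to hang the subdivision subtree below the deeper endpoint, which is what makes the ancestor-descendant relations for the two boundary edges come for free through the ancestry of $v$ in $F$.
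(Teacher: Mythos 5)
Your proposal is correct and follows essentially the same argument as the paper's proof: take an optimal elimination forest of $G$, use $\td(P_{k-1})=\ceils{\log_2 k}$ to build a height-$\ceils{\log_2 k}$ elimination tree for the subdivision vertices of each edge, and attach it as a child subtree of the deeper endpoint. Your explicit verification of the ancestor--descendant condition for the two boundary edges is slightly more detailed than the paper's, but the construction and height analysis are identical.
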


Similarly to the above, we have the following theorem.
\begin{theorem}\label{thm:fpt:td-k}
    \kPTest{} is FPT parameterized by $\text{treedepth} + k$.
\end{theorem}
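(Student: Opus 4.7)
The plan is to combine three ingredients: \cref{lem:subdivision} (which reduces $k$-planarity of $G$ to $1$-planarity of the subdivided graph $G_k$), \cref{lem:fpt:td-Gk} (which bounds the treedepth of $G_k$ in terms of $\td(G)$ and $k$), and the known FPT algorithm of Bannister, Cabello, and Eppstein~\cite{BannisterCE18} for \OPTest{} parameterized by treedepth. The argument then mirrors the reduction used for \cref{thm:fpt:fes}, only with treedepth replacing feedback edge set number.

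More concretely, given an instance $\langle G, k \rangle$ of \kPTest{}, I would first construct $G_k$ in polynomial time. Since $k$ is part of the parameter, the number of subdivision vertices $(k-1)|E(G)|$ is polynomial in the input size, so this step is fine. By \cref{lem:subdivision}, $\lcr(G) \le k$ if and only if $\lcr(G_k) \le 1$, so it suffices to decide $1$-planarity of $G_k$. By \cref{lem:fpt:td-Gk}, we have $\td(G_k) \le \td(G) + \lceil \log_2 k \rceil$, which is bounded by a function of the combined parameter $\td(G) + k$.

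Finally, invoking the FPT algorithm for \OPTest{} parameterized by treedepth~\cite{BannisterCE18} on $G_k$ yields a running time of the form $f(\td(G_k))\cdot |V(G_k)|^{O(1)} = f(\td(G) + \lceil \log_2 k \rceil) \cdot (k \cdot |V(G)| + |E(G)|)^{O(1)}$, which is FPT in $\td(G) + k$. Since no step in this chain is genuinely nontrivial, the main technical obstacle has already been absorbed into \cref{lem:fpt:td-Gk}: showing that subdividing each edge $k-1$ times inflates the treedepth only by an additive $\lceil \log_2 k\rceil$ (intuitively, each subdivided edge can be eliminated by a balanced binary recursion on its interior vertices, contributing a $\log_2 k$ term to any elimination forest of $G$). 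Given that lemma, the theorem follows immediately.
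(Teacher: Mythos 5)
Your proposal matches the paper's argument exactly: subdivide each edge $k-1$ times, invoke \cref{lem:subdivision} to reduce to \OPTest{}, bound $\td(G_k)$ via \cref{lem:fpt:td-Gk}, and apply the treedepth-parameterized FPT algorithm of \cite{BannisterCE18}. The paper derives the theorem in precisely this way (``similarly to'' the feedback-edge-set case), so your write-up is correct and takes the same route.
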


We would like to note that a similar result is already mentioned in \cite{Zehavi22}.
Since $k$-planarity is closed under vertex and edge deletions, it is known that the class of $k$-planar graphs with bounded treedepth is well-quasi-ordered by the induced-subgraph relation~\cite{NesetrilM15}. 
This implies a non-uniform version of \cref{thm:fpt:td-k}: \kPTest{} can be solved by just checking whether $G$ has an induced subgraph that belongs to a finite set $\mathcal F_{k, \td}$ of graphs, where the size of $\mathcal F_{k, \td}$ depends only on $k$ and $\td(G)$.

\Cref{thm:fpt:td-k} leads to FPT results for other parameters. To explain this, we introduce the notion of degeneracy.
For an integer $d \ge 0$, a graph $G$ is said to be \emph{$d$-degenerate} if each nonempty subgraph of $G$ contains a vertex of degree at most $d$.
A graph class $\mathcal C$ is \emph{degenerate} if there exists $d$ such that every graph in $\mathcal C$ is $d$-degenerate.

\begin{proposition}[{\cite[Proposition 6.4]{SparsityNM}}]\label{prop:td-exPn}
    A hereditary class of graphs $\mathcal C$ has bounded treedepth if and only if $\mathcal C$ is degenerate and does not contain a path $P_t$ for some $t$ as an induced subgraph.
\end{proposition}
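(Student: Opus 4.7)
The plan is to prove the two implications separately, since the characterization is an equivalence.

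For the forward direction (bounded treedepth $\Rightarrow$ degenerate and $P_t$-induced-free for some $t$), I would invoke two standard facts about treedepth. First, $\tw(G) \leq \td(G) - 1$, and every graph of treewidth $w$ is $w$-degenerate; hence a class of bounded treedepth is degenerate. Second, the path $P_n$ satisfies $\td(P_n) = \ceils{\log_2(n+1)}$, and since treedepth is monotone under taking (induced) subgraphs, a uniform bound $\td(G) \leq d$ on the class already rules out $P_{2^d}$ as a subgraph, in particular as an induced subgraph. So the forward direction is short and purely structural, using only well-known inequalities between width parameters.

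For the reverse direction (degenerate and $P_t$-induced-free $\Rightarrow$ bounded treedepth), I would factor the proof through the longest-path characterization of treedepth. The first step is the classical inequality $\td(G) \leq \ell(G)$, where $\ell(G)$ is the number of vertices on a longest path of $G$. This follows by a DFS argument: any DFS spanning forest $F$ of $G$ is an elimination forest (every non-tree edge of $G$ joins a vertex with one of its ancestors in $F$), and the height of $F$ is bounded by $\ell(G)$ because each root-to-leaf branch of $F$ is itself a path in $G$.

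The heart of the proof, and the main obstacle, is the combinatorial lemma that a $d$-degenerate graph with no induced $P_t$ has longest path length bounded by some function $f(d,t)$. The rough intuition is that $d$-degeneracy caps the total edge count by $d|V|$, so a long path $P = v_1 \cdots v_N$ carries relatively few chords, and one would like to locate a contiguous chord-free window $v_i, v_{i+1}, \ldots, v_{i+t-1}$, which is an induced $P_t$. The subtlety is that a naive density argument only bounds the \emph{average} number of blocking chords per window and does not itself produce a chord-free window; the extraction therefore has to be either iterated or combined with a Ramsey/Erd\H{o}s--Szekeres-type selection (for instance, first passing to a subpath that is monotone in a fixed degeneracy ordering, then applying the density bound on chords of that subpath). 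This is the step that genuinely uses the hereditary assumption, so that $P_t$-induced-freeness is available on all subpaths extracted from $G$.

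Chaining the two steps completes the reverse direction: degeneracy and forbidden induced $P_t$ bound $\ell(G)$ by $f(d,t)$, which in turn bounds $\td(G)$ by $f(d,t)$, uniformly over the class.
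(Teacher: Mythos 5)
First, note that the paper does not prove this proposition at all: it is imported verbatim from Ne\v{s}et\v{r}il and Ossona de Mendez (\cite[Proposition 6.4]{SparsityNM}), so there is no in-paper argument to compare yours against. Judged on its own, your forward direction is correct and complete ($\tw(G)\le\td(G)-1$ gives degeneracy, and subgraph-monotonicity of treedepth together with $\td(P_n)=\lceil\log_2(n+1)\rceil$ excludes $P_{2^d}$ even as a subgraph), and so is the reduction of the reverse direction to bounding the longest path via the DFS inequality $\td(G)\le\ell(G)$. This matches the architecture of the proof in the cited source.

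The gap is exactly where you flag it: the lemma that a $d$-degenerate graph with no induced $P_t$ has bounded longest path is asserted rather than proved, and your proposed repair does not work as stated. A subsequence of $v_1,\dots,v_N$ that is monotone in a degeneracy ordering is not a subpath (consecutive selected vertices need not be adjacent in $G$), so there is no ``subpath monotone in the degeneracy ordering'' to which a chord-density bound could be applied; and, as you yourself observe, the one-shot density count fails because a path on $N$ vertices in a $d$-degenerate graph can carry on the order of $dN$ short chords, enough to dirty every length-$t$ window. The standard way to close this hole is the Galvin--Rival--Sands theorem: for every $t$ there is $N(t)$ such that any graph containing $P_{N(t)}$ as a subgraph contains either an induced $P_t$ or a $K_{t,t}$ subgraph; since a $d$-degenerate graph contains no $K_{d+1,d+1}$, applying this with $\max\{t,d+1\}$ in place of $t$ finishes the argument. (A minor further remark: hereditariness of $\mathcal C$ is not what makes induced-$P_t$-freeness available on extracted subgraphs --- an induced path of an induced subgraph of $G$ is an induced path of $G$, so that is automatic for a single graph; the hereditary hypothesis only makes the class-level equivalence well-posed.) Without Galvin--Rival--Sands or an equivalent complete combinatorial argument, the heart of the reverse direction is missing.
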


It is known that each $k$-planar graph with $n$ vertices has at most $3.81\sqrt{k} n$ edges~\cite{Ackerman19}.
Since $k$-planarity is closed under taking a subgraph, the class of $k$-planar graphs is degenerate for every fixed~$k$.
Hence, combining with \cref{prop:td-exPn}, \kPTest{} on $P_t$-free graphs parameterized by $k + t$ can be reduced to the one parameterized by $\text{treedepth}+k$.
It is known that a graph class $\mathcal C$ does not contain a path $P_t$ for some $t$ as an induced subgraph if $\mathcal C$ has bounded shrub-depth~\cite{GanianHNOM19} or bounded modular-width~\cite{KimLMP18}.

\begin{corollary}\label{cor:exPnFPT}
    Let $\mathcal C$ be a hereditary class of graphs that excludes some path $P_t$ as an induced subgraph.
    Then, \kPTest{} over $\mathcal C$ is FPT parameterized by $k + t$.
    In particular, \kPTest{} is FPT parameterized by 
    $\text{shrub-depth} +k$,
    and $\text{modular-width} +k$.
\end{corollary}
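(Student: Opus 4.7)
The plan is to reduce to \cref{thm:fpt:td-k} by arguing that, within a hereditary $P_t$-free class $\mathcal{C}$, any $k$-planar graph has treedepth bounded by a function of $k$ and $t$. The class of $k$-planar graphs is itself hereditary (being closed under taking subgraphs) and, by the edge bound of~\cite{Ackerman19}, is $O(\sqrt{k})$-degenerate for each fixed $k$. Hence its intersection with $\mathcal{C}$ is a hereditary class that is both degenerate and excludes $P_t$ as an induced subgraph, so \cref{prop:td-exPn} yields an upper bound $d(k,t)$ on the treedepth of every graph in this intersection.

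Given this bound, the algorithm is as follows. On input $G \in \mathcal{C}$ and $k$, first test whether $\td(G) \le d(k,t)$ using any FPT algorithm for computing treedepth parameterized by treedepth. If $\td(G) > d(k,t)$, then by the bound above $G$ cannot be $k$-planar, so we answer \emph{no}. Otherwise, $\td(G) \le d(k,t)$, and invoking \cref{thm:fpt:td-k} with parameter $\td(G) + k \le d(k,t) + k$ decides the instance in time $f(k,t) \cdot |V(G)|^{O(1)}$. For the ``in particular'' assertion, it suffices to recall that bounded shrub-depth~\cite{GanianHNOM19} and bounded modular-width~\cite{KimLMP18} both exclude some induced path $P_t$ whose length is determined by the respective width parameter, so the general statement specializes to the two named parameterizations.

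The main obstacle is to ensure that the treedepth bound $d(k,t)$ coming from \cref{prop:td-exPn} is effectively computable from $k$ and $t$, since otherwise the threshold test is not algorithmic. This is implicit in the proof of Proposition~6.4 in~\cite{SparsityNM}, where the treedepth bound is given as an explicit (though rapidly growing) function of the degeneracy and the excluded-path length. A minor cosmetic point is that we invoke an FPT treedepth approximation or computation as a subroutine; any standard such routine suffices, and it does not affect the FPT claim in the parameter $k + t$.
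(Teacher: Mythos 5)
Your proposal is correct and follows essentially the same route as the paper: use the edge bound of~\cite{Ackerman19} to get degeneracy of $k$-planar graphs, apply \cref{prop:td-exPn} to bound the treedepth of $k$-planar $P_t$-free graphs by a function of $k$ and $t$, and then invoke \cref{thm:fpt:td-k}. The extra care you take about the effectiveness of the bound $d(k,t)$ and the explicit threshold test is a reasonable elaboration of details the paper leaves implicit, not a different argument.
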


\subsection{Polynomial Kernels}\label{subsec:poly-kernel}
In this subsection, we describe a kernelization algorithm for \kPTest{} parameterized by $\text{vertex cover number} + k$.
The high-level ideas follow the polynomial kernel for \OPTest{} due to Bannister, Cabello, and Eppstein~\cite{BannisterCE18}.

Let $G$ be a graph and let $S$ be a vertex cover of $G$.
We can find such a vertex cover $S$ of size at most twice the optimum in linear time.
First, we observe that the ``diversity'' of vertices outside of a vertex cover $S$ is sufficiently small when $G$ is $k$-planar.
For a function $f$, let $\#_{f}(X)$ denote $|\{f(x) \mid x \in X\}|$, the number of possible images of $X$ through $f$.
\begin{lemma}\label{lem:vc:deg2_class}
    Let $I_{\geq 2} \subseteq V(G) \setminus S$ be the set of vertices with degree at least~2 in $V(G) \setminus S$. 
    Let $\pi$ be a function that maps each vertex $v \in I_{\ge 2}$ to an unordered pair of its distinct neighbors (that is, $\pi\colon I_{\ge 2} \to \binom{N_G(v)}{2}$).
    If $\#_{\pi}(I_{\ge 2}) > |S|\cdot 3.81\sqrt{2k}$, then $G$ is not $k$-planar.
\end{lemma}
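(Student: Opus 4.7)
The plan is to argue by contradiction: assume $G$ admits a $k$-planar drawing $\Gamma$ and extract from it a simple auxiliary graph on the vertex cover $S$ whose drawing inherits a bounded-crossing bound, then apply Ackerman's edge bound cited in the excerpt.

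First, I would define the auxiliary graph $H$ with $V(H) = S$ and $E(H) = \{\pi(v) : v \in I_{\geq 2}\}$. Since $S$ is a vertex cover, every neighbor of every $v \in V(G)\setminus S$ lies in $S$, so $\pi(v)$ is indeed an unordered pair in $\binom{S}{2}$, and $H$ is a simple graph with $|E(H)| = \#_\pi(I_{\geq 2})$. For each edge $e = \{s_1, s_2\} \in E(H)$, fix an arbitrary witness $v_e \in I_{\geq 2}$ with $\pi(v_e) = e$; note that distinct edges of $H$ have distinct witnesses since $\pi$ is a well-defined function.

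Next, I would construct a drawing $\Gamma_H$ of $H$ from $\Gamma$ by routing each edge $e = \{s_1, s_2\}$ along the length-two path $s_1 v_e s_2$ in $\Gamma$, with a small perturbation that bypasses $v_e$ (and any other vertex points the curves happen to approach) without introducing new crossings. The points representing $S$-vertices remain as in $\Gamma$, while the witness points $v_e$ are erased. Since each of the two edges $s_1 v_e$ and $v_e s_2$ is crossed at most $k$ times in $\Gamma$, the resulting curve of $e$ in $\Gamma_H$ is crossed at most $2k$ times. Hence $\Gamma_H$ is a $2k$-planar drawing of $H$, and in particular $H$ is $2k$-planar. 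By Ackerman's bound cited just above the lemma in the paper,
\[
|E(H)| \;\leq\; 3.81\sqrt{2k}\,|V(H)| \;=\; 3.81\sqrt{2k}\,|S|,
\]
contradicting the hypothesis $\#_\pi(I_{\geq 2}) > |S|\cdot 3.81\sqrt{2k}$.

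The main obstacle, and the step I would be most careful about, is verifying that the rerouting in the second paragraph genuinely produces a $2k$-planar drawing of $H$. The subtle points are: (i) the witnesses $v_e$ are distinct across edges of $H$, so the rerouted curves do not share any intermediate point coming from a witness; (ii) near a shared endpoint $s \in S$, the various edges $s v_e$ for different $e$ emanate from $s$ in distinct directions in $\Gamma$ and therefore create no crossings near $s$ after the perturbation; and (iii) the small perturbation bypassing each $v_e$ can be performed within an arbitrarily tight neighborhood, which avoids introducing any new crossings between $H$-edges or between an $H$-edge and a $G$-edge. Once these points are spelled out, every crossing of an $H$-edge in $\Gamma_H$ is inherited from a crossing of one of its two constituent $G$-edges in $\Gamma$, giving the $2k$ bound and closing the argument.
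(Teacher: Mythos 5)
Your proposal is correct and matches the paper's argument: the paper likewise forms the graph $G_S$ on $S$ with edge set $\pi(I_{\ge 2})$, observes that its once-subdivided version is a subgraph of $G$ (hence $k$-planar), concludes via the subdivision lemma that $G_S$ is $2k$-planar, and applies Ackerman's bound. You simply inline the curve-concatenation argument instead of invoking \cref{lem:subdivision}, which is an equivalent route.
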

\begin{proof}
Let $G_S$ be a graph with $V(G_S) = S$ and $E(G_S) = \pi(I_{\ge 2})$.
Let $G'$ be the graph obtained from $G_S$ by subdividing each edge once.
Then $G'$ is a subgraph of $G$, and hence $G'$ is $k$-planar.
This implies that by~\cref{lem:subdivision}, $G_S$ is $2k$-planar.
Hence, $\#_{\pi}(I_{\ge 2}) = |E(G_S)| \leq |S| \cdot 3.81\sqrt{2k}$,
where the last inequality is shown in \cite{Ackerman19}.
\end{proof}

It is known that, for a positive integer $k$, if $G$ contains $K_{7k+1, 3}$ as a subgraph then $G$ is not $k$-planar~\cite{CzapH21,Pfister2025}.
Hence, the following also immediately holds.
\begin{corollary}\label{cor:vc:false_twins-deg3}
  Let $u \in V(G) \setminus S$ be a vertex of degree at least~3.
  If there are at least $7k+1$ false twins of $u$ (including $u$ itself), then the graph $G$ is not $k$-planar.
\end{corollary}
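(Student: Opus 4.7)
The plan is to combine the structural observation that false twins share neighborhoods with the known forbidden-subgraph result cited immediately before the statement, namely that any $k$-planar graph avoids $K_{7k+1,3}$ as a subgraph. The goal is therefore to exhibit a $K_{7k+1,3}$ subgraph inside $G$ whenever the hypothesis holds.

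First I would use the vertex cover property: since $S$ is a vertex cover of $G$, the set $V(G) \setminus S$ is independent, and in particular every neighbor of $u$ lies in $S$. Because $u$ has degree at least $3$, there exist three distinct vertices $v_1, v_2, v_3 \in S \cap N_G(u)$. Next I would invoke the defining property of false twins: every false twin $w$ of $u$ satisfies $N_G(w) = N_G(u)$, so each of the $7k+1$ false twins of $u$ (including $u$ itself) is adjacent to all three of $v_1, v_2, v_3$. Taking the $7k+1$ false twins as one side of the bipartition and $\{v_1, v_2, v_3\}$ as the other, we obtain a copy of $K_{7k+1,3}$ as a subgraph of $G$.

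Applying the cited result of \cite{CzapH21,Pfister2025}, this forbids $G$ from being $k$-planar, which is exactly the conclusion. There is no real obstacle: the argument is a direct instantiation of the cited forbidden-subgraph result, and the only tiny point to check is that the false twins are pairwise distinct (which follows from them being vertices of $G$) so that they genuinely form one side of a $K_{7k+1,3}$. No additional drawing argument or counting is needed beyond quoting \cite{CzapH21,Pfister2025}.
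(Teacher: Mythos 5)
Your proof is correct and matches the paper's intended argument: the paper derives the corollary immediately from the cited fact that $k$-planar graphs contain no $K_{7k+1,3}$ subgraph, exactly as you do by taking the $7k+1$ false twins on one side and three common neighbors of $u$ on the other. No issues.
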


Next, we consider vertices of degree 2.
The following lemma is a key ingredient of our kernelization.
It is worth mentioning that, although a similar reduction rule is used in the case $k = 1$~\cite{BannisterCE18}, the proof for the general case $k \ge 1$ is considerably more involved. 

\begin{lemma}\label{lem:vc:false_twins-deg2}
  Let $u \in V(G) \setminus S$ be a vertex of degree~2 with $N_G(u) = \{s_1, s_2\} \subseteq S$.
  If there are more than $16k^2|S|$ vertices in $V(G) \setminus S$ with the same neighborhood $\{s_1, s_2\}$, then $G$ is $k$-planar if and only if $G-\{u\}$ is $k$-planar.
\end{lemma}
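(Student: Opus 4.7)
The \emph{only if} direction is immediate, as $k$-planarity is preserved under vertex deletion. For the \emph{if} direction, the plan is to fix a $k$-planar drawing $\Gamma$ of $G - \{u\}$ and insert $u$ together with its two edges into $\Gamma$ while maintaining $k$-planarity. Let $U = \{u_1, \dots, u_t\}$ denote the degree-$2$ vertices of $V(G - \{u\}) \setminus S$ with neighborhood $\{s_1, s_2\}$; by hypothesis $t \ge 16k^2|S|$. Write $P_i = s_1 u_i s_2$ for the corresponding $2$-path. The strategy will be to locate indices $i \neq j$ such that the Jordan cycle $C_{ij} = P_i \cup P_j$ bounds a region $R$ in $\Gamma$ ``empty enough'' for $u$ to be inserted.

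First I would exploit the $k$-planarity to bound $P$-crossings: each of $P_i$'s two edges has at most $k$ crossings, so $c_i := $ (crossings on $P_i$'s edges) $\leq 2k$. Decomposing $c_i = \sum_{j\neq i} a_{ij} + b_i$ into $P_i$-$P_j$ and $P_i$-non-$P$ contributions, summing yields $\sum_{i<j} a_{ij} \leq tk$. A Markov-type averaging then produces a light subset $U' \subseteq U$ of size at least $t/2 \geq 8k^2|S|$ such that each $P_i$ with $u_i \in U'$ crosses at most $4k$ other $P_j$'s. Consequently, $\Omega(|U'|^2)$ pairs in $U'$ are non-crossing, far more than $|S|^2$.

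Next --- and this is the main technical obstacle --- I would use a pigeonhole argument to exhibit a non-crossing pair $(i,j)$ in $U'$ whose Jordan cycle $C_{ij}$ bounds a region $R$ satisfying three conditions: (a)~$R$ contains no vertex of $S \setminus \{s_1, s_2\}$; (b)~$R$ contains no vertex of $U \setminus \{u_i, u_j\}$; and (c)~every edge of $\Gamma$ crossing the boundary of $R$ has strictly fewer than $k$ crossings in $\Gamma$ (i.e., has a unit of ``crossing slack''). Each obstruction of each type --- the $|S|-2$ candidate vertices of $S$, the other $u_k$'s, and the edges of $\Gamma$ saturated at exactly $k$ crossings that cross some $P_i$ --- can spoil only a bounded number of candidate pairs, and the factor $16k^2|S|$ is calibrated precisely so that a union bound over these obstructions still leaves a clean pair.

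Finally, given such a clean pair, I would insert $u$ into $R$ very close to $P_i$ and route its two edges to $s_1, s_2$ as narrow parallel copies of $P_i$'s edges on the $R$-side. Each of $u$'s edges then crosses exactly those edges that cross the corresponding edge of $P_i$ on the $R$-side --- at most $k$ crossings per $u$-edge, inheriting the bound from $P_i$. By~(c), each crossed edge was not saturated in $\Gamma$, so its crossing count remains at most $k$ after $u$'s insertion, yielding the desired $k$-planar drawing of $G$.
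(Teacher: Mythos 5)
There is a genuine gap, and it sits precisely at your condition~(c). Your strategy keeps the drawing $\Gamma$ of $G-\{u\}$ completely fixed and inserts $u$ locally alongside some twin path $P_i$, which forces you to find a $P_i$ such that \emph{every} edge crossing $P_i$ has a unit of crossing slack. No union bound over ``obstructions'' can deliver this, because the saturated edges that cross twin paths can themselves be edges of \emph{other twin paths} (or, more generally, edges incident to $s_1$ or $s_2$), and their number is not bounded by any function of $k$ and $|S|$. Concretely, for $k=1$ pair up the $t$ twin paths and draw the two paths of each pair so that they cross exactly once (two Jordan arcs sharing the endpoints $s_1,s_2$ can cross transversally once, as in a figure-eight), with distinct pairs placed in disjoint parts of the plane. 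Then each twin path is crossed by an edge having exactly $k=1$ crossings, so condition~(c) fails for every candidate pair and no local insertion of $u$ is possible --- even though $G$ is of course still $1$-planar. Your averaging step does not exclude this configuration (each path crosses only one other), and your third obstruction type spoils \emph{all} $t$ paths, not a bounded number of them, so the calibration of $16k^2|S|$ cannot save the argument.

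The missing idea --- and the way the paper proceeds --- is that one must \emph{reroute the twin paths globally} rather than merely insert $u$ into the given drawing. The paper orders the twin paths cyclically around $s_1$, proves that two paths at cyclic distance at least $4k$ cannot cross, and thereby extracts a pairwise non-crossing subfamily $\mathcal{P}'$ of $4k|S|$ paths that partitions the plane into $4k|S|$ regions. It then uses the fact that every edge not lying on a twin path is anchored at some $v\in S\setminus\{s_1,s_2\}$ (so it is confined to a window of $4k$ consecutive regions) to find at least $7k$ paths of $\mathcal{P}'$ that are crossed \emph{only by other twin paths}. Deleting all twin paths except one such path $P$ renders $P$ crossing-free, after which all deleted paths together with $(s_1,u,s_2)$ are redrawn in a thin corridor along $P$ with no crossings at all. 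This deletion step is exactly what eliminates the saturated crossing edges that defeat your local argument; the first half of your proof (extracting many pairwise non-crossing paths) parallels the paper's cyclic-difference claim and is salvageable, but the endgame must be a redrawing, since the ``clean pair'' your proof requires simply does not exist for some $k$-planar drawings of $G-\{u\}$.
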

\begin{proof}
    The ($\Rightarrow$) direction is clear.
    Suppose that $G' \coloneqq G - \{u\}$ is $k$-planar and let $\Gamma$ be a $k$-planar drawing of $G'$.
    We can assume that $G'$ has no vertex of degree~1.
    
    Let $T_u = \{v \in V(G') \setminus S \mid N_{G'}(v) = \{s_1, s_2\}\}$.
    Then, we have $t_u = |T_u| \geq 16k^2 |S|$. %
    Let $\mathcal{P}$ be the set of paths between $s_1$ and $s_2$ via a vertex in $T_u$.
    Let us align the paths in $\mathcal{P}$ according to the cyclic order in $\Gamma$ around $s_1$, as $P_0, P_1, \dots, P_{t_u-1}$.
    For two paths $P_i, P_j$, the \emph{cyclic difference} of $P_i$ and $P_j$ is defined as $\min\{j - i, i - j + t_u\}$.
    We then claim the following.

    \begin{claim}\label{claim:cyclic-difference-geq-4k-not-crossing}
        Two paths $P_i, P_j \in \mathcal P$ with the cyclic difference at least $4k$ do not cross in $\Gamma$.
    \end{claim}

    \begin{claimproof}
        \begin{figure}
            \centering
            \includegraphics[page=1]{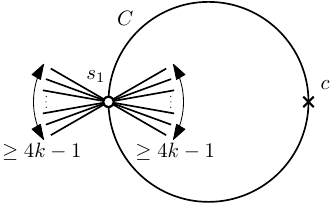}
            \caption{The Jordan curve $C$, the crossing point $c$, and the vertex $s_1$ with edges going out to each of the interior and exterior of $C$ due to the cyclic difference condition.}
            \label{fig:cyclic-difference-geq-4k}
        \end{figure}
        Suppose that $P_i$ and $P_j$ do cross in $\Gamma$.
        Let $c$ be a crossing point between $P_i$ and $P_j$, and let $C$ be a Jordan curve consisting of two arcs: the arc of $P_i$ from $s_1$ to $c$ and the arc of $P_j$ from $c$ to $s_1$.
        Such a crossing point $c$ can be chosen so that $C$ forms a closed curve.
        In both cases where $s_2$ is in the interior or in the exterior of regions bounded by $C$, from the condition that the cyclic difference of $P_i$ and $P_j$ is at least $4k$, there are at least $4k-1$ edges that must cross $C$ as shown in \cref{fig:cyclic-difference-geq-4k}.
        As $P_i, P_j$ can have at most $2k-1$ crossings other than $c$ each, and hence $4k-2$ crossings in total, this leads to a contradiction.
    \end{claimproof}

    \begin{figure}
        \centering
        \includegraphics{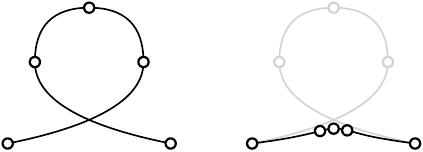}
        \caption{Let $P$ be a path whose internal vertices have degree exactly 2.
        A ``self-crossing'' on the path can be removed by suppressing the ``loop'' formed by the edges in $P$.}
        \label{fig:self-crossing}
    \end{figure}
    
    Let $\mathcal{P}' = \{P_{4ki} \mid 0 \leq i < 4k|S|\}$.
    As the cyclic difference of any pair of two paths in $\mathcal{P}'$ is at least $4k$, by \cref{claim:cyclic-difference-geq-4k-not-crossing}, no two paths in $\mathcal{P}'$ cross.
    We can also assume that each path in $\mathcal{P}'$ does not cross itself, as it can be resolved as shown in \cref{fig:self-crossing}.
    Hence, they separate the plane into $4k|S|$ regions similarly to \cref{fig:reduction:distance-to-path-forest:regions}.
    For each $0 \leq i \leq 4k|S|$, let $R_i$ be the region that is bounded by two paths $P_{4ki}$ and $P_{4k(i+1)}$, where $P_{4k \cdot 4k|S|} \coloneqq P_0$.
    Observe that, for each vertex $v \in S \setminus \{s_1, s_2\}$, the number of paths in $\mathcal{P}'$ crossed by an edge incident to $N_{G'}(v)$ is at most $4k$:
    when $v$ belongs to region~$R_i$, each edge incident to a vertex in $N_{G'}(v)$ can only be in regions at most $2k$ apart from $R_i$, namely, $\{R_{j \bmod 4k|S|} \mid i-2k \leq j \leq i + 2k\}$.
    Since $S$ is a vertex cover of $G'$, for every edge $e \in E(G')$, $e = \{s_1, s_2\}$ (if it exists); $e$ is contained in a path in $\mathcal P$; or $e$ is incident to some vertex in $S\setminus\{s_1, s_2\}$ due to the assumption that there is no degree-1 vertex.
    Since at most $k$ paths in $\mathcal P'$ can be crossed by $e = \{s_1, s_2\}$ (if it exists) and at most $4k(|S|-2)$ paths in $\mathcal P'$ can be crossed by an edge incident to a vertex in $S \setminus \{s_1, s_2\}$, there are at least $4k|S| - (4k(|S| - 2)) - k = 7k$ paths in $\mathcal{P}'$ that are crossed only by the paths in $\mathcal{P}$ or not crossed at all.
    
    Let $P \in \mathcal{P}'$ be such a path.
    We remove all the paths in $\mathcal{P}$ but $P$ from $\Gamma$.
    Observe that now $P$ has no crossings at all.
    Hence, we can redraw all the removed paths along $P$ without making a crossing.
    We can also add the path $(s_1, u, s_2)$ to $\Gamma$ along $P$ in the same way and obtain a $k$-planar drawing of $G$.
\end{proof}

By summarizing the above lemmas, we can obtain a kernelization algorithm.
\begin{restatable}[\restateref{thm:vc-kernel}]{theorem}{ThmVcKernel}
\label{thm:vc-kernel}
    \kPTest{} has a kernel with $O(\vc(G)^2k^2\sqrt k)$ vertices.
    Moreover, such a kernel can be computed in linear time.
\end{restatable}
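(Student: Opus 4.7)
\begin{proofsketch}
My plan is to compute a $2$-approximate vertex cover $S$ of $G$ in linear time via the standard maximal-matching algorithm (so $|S| \le 2\vc(G)$), and then to shrink $I := V(G) \setminus S$, which is an independent set, by a short sequence of safe reduction rules driven by the three preceding lemmas.

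First, I discard all isolated vertices of $I$ and, for each $s \in S$, retain at most one degree-$1$ neighbor of $s$ in $I$; additional pendants can always be routed alongside the retained pendant without introducing crossings, so this is safe. Let $I_{\ge 2}$ denote the remaining vertices of $I$. I then group $I_{\ge 2}$ into false-twin classes by radix-sorting the sorted neighborhood lists, which takes linear time. For each false-twin class of degree exactly $2$ whose size exceeds $16k^2 |S|$, I iteratively delete surplus twins by \cref{lem:vc:false_twins-deg2}. For each false-twin class of degree at least $3$ whose size exceeds $7k$, I report ``not $k$-planar'' by \cref{cor:vc:false_twins-deg3}; similarly, if the number of distinct $\pi$-images among $I_{\ge 2}$ exceeds $|S| \cdot 3.81\sqrt{2k}$ (for any fixed choice of $\pi$ such as the lexicographically smallest pair in $N_G(v)$), I report ``not $k$-planar'' by \cref{lem:vc:deg2_class}.

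The size bound then combines the preceding lemmas. The set $S$ and retained pendants contribute $O(\vc(G))$ vertices. Degree-$2$ false-twin classes are in bijection with their distinct neighbor pairs, so by \cref{lem:vc:deg2_class} there are at most $|S| \cdot 3.81\sqrt{2k}$ such classes; each retains at most $16k^2 |S|$ vertices, contributing $O(\vc(G)^2 k^{5/2})$ in total. For degree-$\ge 3$ false-twin classes, the $K_{3,7k+1}$-freeness of $k$-planar graphs (the same ingredient behind \cref{cor:vc:false_twins-deg3}) implies that for each pair $\{a,b\} \subseteq S$ appearing in some degree-$\ge 3$ neighborhood, at most $7k(|S|-2)$ such classes can contain $\{a,b\}$: otherwise, some $c \in S \setminus \{a,b\}$ would extend $\{a,b\}$ to a common triple witnessing a $K_{3,7k+1}$ inside $G$. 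Since each degree-$\ge 3$ neighborhood contributes at least three pairs, combining this per-pair bound with \cref{lem:vc:deg2_class} applied to a suitable $\pi$-assignment bounds the number of degree-$\ge 3$ classes, so that their overall contribution fits within the promised $O(\vc(G)^2 k^2 \sqrt{k})$ budget.

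Linear-time computability is routine bookkeeping: the $2$-approximation for vertex cover is linear; radix-sorting neighborhoods is linear in the edge count; and each reduction visits every vertex of $I$ only a constant number of times. The principal difficulty is conceptual rather than computational and lies in \cref{lem:vc:false_twins-deg2}, whose proof requires the cyclic-difference pigeonhole argument on spoke-paths; once that lemma is granted, the kernelization itself reduces to the sequence of reductions and counting arguments above.
\end{proofsketch}
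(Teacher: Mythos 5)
Your proposal is correct and follows essentially the same route as the paper's proof: a linear-time $2$-approximate vertex cover, removal of low-degree vertices, the bound on distinct neighbor pairs via \cref{lem:vc:deg2_class}, the $K_{3,7k+1}$ pigeonhole for degree-$\ge 3$ vertices, and exhaustive application of \cref{lem:vc:false_twins-deg2} for degree-$2$ twins, with the degree-$2$ classes dominating the $O(\vc(G)^2 k^2\sqrt{k})$ count. Your degree-$\ge 3$ accounting (bounding classes per pair and then class sizes, rather than directly bounding vertices per $\pi$-class as the paper does) is a slight detour but yields a bound that still fits within the stated budget.
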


We can obtain a similar result for the case of neighborhood diversity.

\begin{restatable}[\restateref{lem:nd_k-planar}]{lemma}{LemNdkPlanar}
    \label{lem:nd_k-planar}
  The vertex cover number of a $k$-planar graph $G$ is at most $O(\nd(G) \sqrt k)$.
\end{restatable}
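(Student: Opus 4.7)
The plan is to produce an explicit vertex cover $S$ of size $O(\nd(G)\sqrt{k})$ that takes each ``small'' twin class in full and discards every ``large'' twin class. Let $V_1,\dots,V_t$ with $t = \nd(G)$ be the twin classes of $G$, and fix a threshold $\tau := \lceil 8\sqrt{k}\rceil$. Call a class $V_i$ \emph{small} if $|V_i|\le\tau$ and \emph{large} otherwise, and set $S := \bigcup_{V_i\text{ small}} V_i$, so that $|S|\le t\tau = O(\nd(G)\sqrt{k})$ holds by construction.

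The core of the argument is two applications of the $k$-planar edge-density bound $|E(H)|\le 3.81\sqrt{k}\,|V(H)|$ (Ackerman, already invoked in~\cref{lem:vc:deg2_class}) to subgraphs of $G$. First, for every true twin class $V_i$ the subgraph $G[V_i]=K_{|V_i|}$ is $k$-planar, so $\binom{|V_i|}{2}\le 3.81\sqrt{k}\,|V_i|$ yields $|V_i|\le 7.62\sqrt{k}+1\le\tau$; hence every true twin class is small. Second, if two twin classes $V_i, V_j$ are completely joined by edges in $G$, then $G[V_i\cup V_j]$ contains $K_{|V_i|,|V_j|}$ and is $k$-planar, so $|V_i|\,|V_j|\le 3.81\sqrt{k}(|V_i|+|V_j|)$; dividing both sides by $\max(|V_i|,|V_j|)$ gives $\min(|V_i|,|V_j|)\le 7.62\sqrt{k}\le\tau$, so at least one side of every completely-joined pair is small.

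Finally, I would verify that $S$ is indeed a vertex cover. By the definition of neighborhood diversity, every edge of $G$ either~(a)~lies inside a single twin class, which must then be a true twin class and is therefore small and fully in $S$, or~(b)~runs between two distinct twin classes that are completely joined, in which case the smaller of the two is small and fully in $S$. Consequently, two large classes can never be completely joined, so discarding them loses no edges, and $\vc(G)\le |S|=O(\nd(G)\sqrt{k})$ follows. The only real care needed is to calibrate a single threshold $\tau$ so that the two density-bound estimates apply simultaneously; beyond that the argument is straightforward, and I do not expect any substantive obstacle.
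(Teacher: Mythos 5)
Your proof is correct and follows essentially the same route as the paper's: the paper first uses Ackerman's density bound to show that $k$-planar graphs exclude $K_{t,t}$ for $t=O(\sqrt{k})$, and then shows that $K_{t,t}$-free graphs of neighborhood diversity $d$ have vertex cover number at most $2dt$ by taking all (necessarily small) true twin classes together with the smaller side of every completely joined pair of false twin classes --- precisely the small-versus-large dichotomy you apply. The only difference is cosmetic: you inline the density bound into the twin-class argument instead of factoring out a separate $K_{t,t}$-freeness lemma, and both yield the same $O(\nd(G)\sqrt{k})$ bound.
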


By plugging \cref{lem:nd_k-planar} to \cref{thm:vc-kernel}, we have the following.
\begin{corollary} \label{thm:nd-kernel}
    \kPTest{} has a kernel with $O(\nd(G)^2 k^3 \sqrt k)$ vertices.
\end{corollary}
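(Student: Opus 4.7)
The plan is simply to compose the two preceding results. By Lemma~\ref{lem:nd_k-planar}, any $k$-planar graph $G$ satisfies $\vc(G) = O(\nd(G)\sqrt{k})$, so applying the kernelization from Theorem~\ref{thm:vc-kernel} will yield a kernel whose size is bounded in terms of $\nd(G)$ and $k$ rather than $\vc(G)$.

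More concretely, I would first compute the neighborhood diversity $\nd(G)$ and a $2$-approximate vertex cover $S$ of $G$ in polynomial time (the former via the standard twin-class partition, the latter via a maximal matching). Let $c$ be the constant hidden in Lemma~\ref{lem:nd_k-planar}, so that every $k$-planar graph $G$ satisfies $\vc(G) \le c \cdot \nd(G)\sqrt{k}$. If $|S| > 2c \cdot \nd(G)\sqrt{k}$, then $\vc(G) \ge |S|/2 > c \cdot \nd(G)\sqrt{k}$, so $G$ cannot be $k$-planar by Lemma~\ref{lem:nd_k-planar}, and I would simply output a trivial no-instance. Otherwise, $\vc(G) \le |S| = O(\nd(G)\sqrt{k})$, and I would run the kernelization of Theorem~\ref{thm:vc-kernel} on $G$. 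The resulting kernel will have
\[
   O\bigl(\vc(G)^2\, k^2 \sqrt{k}\bigr)
   \;=\; O\bigl((\nd(G)\sqrt{k})^2 \cdot k^2 \sqrt{k}\bigr)
   \;=\; O\bigl(\nd(G)^2\, k^3 \sqrt{k}\bigr)
\]
vertices, as claimed.

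There is no real obstacle here beyond routine bookkeeping: the arithmetic $(\sqrt{k})^2 \cdot k^2\sqrt{k} = k^3\sqrt{k}$ is immediate, and both ingredients are applied as black boxes. The only mildly subtle point is that Theorem~\ref{thm:vc-kernel} is phrased in terms of the true vertex cover number, whereas I only have a $2$-approximate cover; but inspecting its linear-time proof, the kernelization in fact operates on any given vertex cover, so passing in $S$ costs only a constant factor that is absorbed into the $O(\cdot)$.
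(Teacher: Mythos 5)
Your proposal is correct and is exactly the paper's argument: the paper obtains this corollary by ``plugging'' Lemma~\ref{lem:nd_k-planar} into Theorem~\ref{thm:vc-kernel}, and your bookkeeping with the $2$-approximate vertex cover (rejecting if it exceeds the $k$-planar bound, otherwise substituting $\vc(G) = O(\nd(G)\sqrt{k})$ into the $O(\vc(G)^2 k^2\sqrt{k})$ kernel) is the intended routine composition.
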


Finally, we remark that the results of \cref{thm:vc-kernel} and \cref{thm:nd-kernel} cannot be generalized to vertex integrity and modular-width parameterizations (see \cite{GajarskyLO13,DBLP:journals/tcs/GimaHKKO22} for their definitions) unless $\mathrm{NP}\subseteq \mathrm{coNP} / \mathrm{poly}$, respectively.
Let $G_1, \ldots, G_t$ be graphs and $G$ be the disjoint union of $G_1,\ldots, G_t$.
Let $\vi(G)$ be the vertex integrity of $G$ and let $\mw(G)$ be the modular-width of $G$.
Then, it holds that (1) $G$ is 1-planar if and only if $G_i$ is 1-planar for all $1 \le i \le t$ and (2) $\vi(G) \le \max_{i}|V(G_i)|$ and $\mw(G) \le \max_i |V(G_i)|$.
This implies that \OPTest{} parameterized by vertex integrity and by modular-width is AND-cross-compositional~\cite{BodlaenderDFH09,Drucker15,Kernelization}, implying that unless $\mathrm{NP}\subseteq \mathrm{coNP} / \mathrm{poly}$, there is no polynomial kernelization for \OPTest{} of size $\mathrm{poly}(\vi(G))$ or $\mathrm{poly}(\mw(G))$.
As $\td(G) \le \vi(G)$, this kernelization lower bound also holds for treedepth parameterization.

\section{Conclusion}
In this paper, we study the parameterized complexity of \kPTest{} from the perspective of graph structural parameterizations and show several (tight) upper and lower bound results.
We leave several interesting open problems relevant to our results. 

When $k$ is considered as input, we have only shown that \kPTest{} is FPT parameterized by feedback edge set number. 
It would be interesting to seek similar results using other graph parameters. 
In this direction, a highly related problem \CRProb{} is known to be FPT parameterized by vertex cover number~\cite{HlinenyS19}.
However, it seems that their approach cannot be directly applied to \kPTest{}, requiring new insights for this problem.
Towards this, showing intractability with a general parameter of vertex cover number, such as vertex integrity or neighborhood diversity (see \cref{fig:hierarchy:without-k}), would also be a nice open problem.

\bibliography{ref}

\arxiv{
    \newpage
    \appendix
    \section{Appendix: Missing Proofs}

\lemSubdivision*\label{lem:subdivision*}

\begin{proof}
    Let $\Gamma$ be a $k$-planar drawing of $G$.
    For each curve in $\Gamma$ representing an edge, since there are at most $k$ crossings on it, we can partition it into exactly $k$ subcurves so that each subcurve has at most one crossing on its interior.
    We obtain a $1$-planar drawing of $G_k$ by putting vertices on such $k-1$ division points for each curve.

    Conversely, let $\Gamma_k$ be a 1-planar drawing of $G_k$.
    For each $e = \{u, v\} \in E(G)$, let $P_e$ be the corresponding path $P_e$ between $u$ and $v$ in $G_k$, which consists of $k$ edges.
    We can assume that no two edges in $P_e$ cross, as otherwise we can remove the crossing without losing 1-planarity (\cref{fig:self-crossing}).
    Let $\Gamma$ be a drawing of $G$ obtained from $\Gamma_k$ by, for each $e \in E(G)$, concatenating the curves representing the edges in $P_e$.
    Since the vertices of $P_e$ have degree~$2$ except for the endpoints $u$ and $v$, concatenating the curves does not yield a new crossing.
    Hence, every curve representing an edge in $\Gamma$ has at most $k$ crossings.
\end{proof}

\LemCrossingFreeSpokes*\label{lem:crossing-free-spokes*}

\begin{proof}
    Let $\Gamma$ be a $k$-planar drawing of $G'$.
    Since each edge in $G$ has at most $k$ crossings and there are $km+1$ spokes between $r$ and each $u \in X$, there is at least one spoke connecting $r$ and $u$ that does not cross any edge in $G$.
    For $u \in X$, let $P_u$ be such a spoke between $r$ and $u$.
    Note that $P_u$ may cross other spoke edges.
    
    Suppose that $P_u$ does not have any crossings with $P_v$ in $\Gamma$ for all pairs $u, v \in X$.
    Then, we can redraw spokes between $r$ and $u$ other than $P_u$ along with the curve representing $P_u$ for each $u \in X$.
    The drawing obtained $\Gamma'$ in this way is still $k$-planar, and each spoke between $r$ and $u$, not limited to $P_u$, has no crossings at all.
    Thus, $\Gamma'$ is a $k$-planar drawing being claimed.
    
    Suppose otherwise.
    Let $u, v \in X$ such that $P_u$ crosses $P_v$ in $\Gamma$ and let $c$ be the crossing point.
    We can remove the crossing by exchanging the subcurve of $P_u$ between $r$ and $c$ with that of $P_v$ between $r$ and $c$ (\cref{fig:uncrossing-spoke}).
    \begin{figure}[t]
        \centering
        \includegraphics{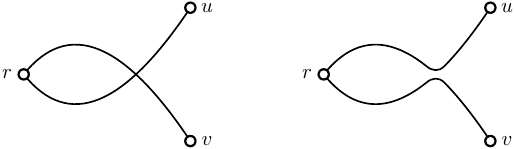}
        \caption{A crossing between $P_u$ and $P_v$ can be removed by rerouting subcurves separating the crossing point.}
        \label{fig:uncrossing-spoke}
    \end{figure}
    Note that the drawing obtained in this way may not be $k$-planar.
    By applying this uncrossing operation between $P_u$ and $P_v$ for all $u, v \in X$, we can obtain a (not necessarily $k$-planar) drawing of $G'$ such that each $P_u$ has no crossing with $P_v$ for all $v \in X$.
    Now, we can construct a drawing $\Gamma'$ of $G'$ by redrawing all spokes between $r$ and $u$ other than $P_u$ along with the curve representing $P_u$.
    The drawing $\Gamma'$ is indeed $k$-planar as all the spokes have no crossings and the subdrawings of $\Gamma$ and $\Gamma'$ induced by $G$ are identical.
    This completes the proof of the lemma.
\end{proof}

\LemUBPModerateSizeItem*\label{lem:UBP:moderate-size-item*}

\begin{proof}
    Starting from an instance of \UnaryBinPacking{} $I = \langle S, B, b \rangle$, we construct an equivalent instance $I' = \langle S', B', b \rangle$ such that every $x' \in S'$ satisfies $x' \leq \sqrt{B'}-1$.
    Without loss of generality, we assume that $x \le B$ for $x \in S$.
    We then add $2bB$ copies of $B+1$ to $S$, and let $B' = B + 2B(B+1)$.
    Let $S'$ be the multiset of integers obtained in this way.
    As $b$ and $B$ are bounded by a polynomial in $|S|$ from above, $I'$ is indeed an instance of \UnaryBinPacking{}.
    For every $x \in S'$, we can verify the restriction as:
    \begin{align*}
        x \le B = (B + 1) - 1 = \sqrt{B^2 + 2B + 1} - 1 \le \sqrt{2B^2 + 3B} - 1 = \sqrt{B'} - 1. 
    \end{align*}

    Now we show the correctness of the reduction.
    It is clear that if $I$ is a yes-instance then $I'$ is also a yes-instance.
    Conversely, let us assume that $I'$ is a yes-instance.
    Then, each bin must have exactly $2B$ copies of $B+1$, since there are $2bB$ of them and each bin can have at most $2B$ of them by construction.
    Hence, removing those copies from a solution for $I'$, all bins have integers whose total sum is exactly $B$, which forms a solution for $I$.
\end{proof}

\LemFptTdGk*\label{lem:fpt:td-Gk*}

\begin{proof}
    Let $F$ be an elimination forest of $G$ with height $d$.
    Now, we create an elimination forest of $G_k$ with height at most $d + \ceils{ \log_2 k}$.
    Let $(u, p_1, \ldots, p_{k-1}, v)$ be the path of $G_k$ corresponding to an edge $\{u,v\} \in E(G)$.
    Since the treedepth of the path graph $P_{n-1}$ with $n-1$ vertices is $\ceils{\log_2 n}$~\cite{SparsityNM},
    there is an elimination tree $T_{u,v}$ of $G_k[\{p_1, \ldots, p_{k-1}\}]$ with height $\ceils{\log_2 k}$.
    For each edge $\{u,v\} \in E(G)$, assume that $u$ is an ancestor of $v$ and attach $T_{u,v}$ as a child subtree to node $v$ in $F$. Let $F_k$ be the rooted forest obtained as above.
    It is clear that the height of $F_k$ is at most $d + \ceils{\log_2 k}$.
    Moreover, for each edge of $E(G_k)$, one of the endpoints is an ancestor of the other end in $F_k$.
    Thus, $F_k$ is an elimination forest of $G_k$ with height at most $d + \ceils{\log_2 k}$.
\end{proof}

\ThmVcKernel*\label{thm:vc-kernel*}
\begin{proof}
Since removing vertices of degree at most~1 does not change the local crossing number, we can assume that the minimum degree of $G$ is at least~2.
Let $S$ be a vertex cover of $G$.
We define a labeling $\pi: V(G) \setminus S \to \binom{N_G(v)}{2}$ that maps $v \in V(G) \setminus S$ to an unordered pair of its neighbors, and compute $\#_{\pi}(V(G) \setminus S)$ in linear time by radix sort.
From \cref{lem:vc:deg2_class}, if $\#_{\pi}(V(G) \setminus S) > 3.81|S|\sqrt{2k}$ then $G$ is not $k$-planar.

Suppose that there are more than $7k(|S| - 2)$ vertices of degree at least~3 in $V(G) \setminus S$ having the same labeling on $\pi$.
Since $|S\setminus \pi(v)| = |S|-2$, by the pigeonhole principle, $G$ contains $K_{7k+1, 3}$ as a subgraph.
Hence, by \cref{cor:vc:false_twins-deg3}, we can assume that $G$ has at most $7k(|S| - 2)\cdot 3.81|S|\sqrt{2k} \in O(|S|^2 k\sqrt k)$ vertices of degree at least 3 in $V(G)\setminus S$.
Applying \cref{lem:vc:false_twins-deg2} exhaustively, the number of degree-2 vertices is at most $16k^2|S| \cdot 3.81|S| \sqrt{2k} \in O(|S|^2 k^2\sqrt k)$.
Thus, we conclude that $|V(G)| \in O(|S|^2 k^2 \sqrt k)$.

Moreover, each processing can be done by counting vertices on the labeling based on $\pi$ and the degree, and removing some of them. Therefore, the entire running time of our kernelization is linear in the size of $G$.
\end{proof}

\LemNdkPlanar*\label{lem:nd_k-planar*}
To show the \cref{lem:nd_k-planar}, we first show the following lemma.

\begin{lemma}
\label{lem:nd-sparse-vc-bound}
    Let $G$ be a graph with neighborhood diversity $d$.
    If $G$ does not contain $K_{t,t}$ as a subgraph for some~$t$, then the vertex cover number of $G$ is at most $2dt$.
\end{lemma}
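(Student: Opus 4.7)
The plan is to exploit the fact that a neighborhood-diversity decomposition gives a partition of $V(G)$ into at most $d$ twin classes $V_1,\ldots,V_{d'}$ (with $d'\le d$), and that between any two of them all possible cross-edges either exist or none do. I will extract two structural restrictions from the $K_{t,t}$-freeness and then build a vertex cover greedily from them.

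First I will observe two consequences of $G$ being $K_{t,t}$-free. (i) If $V_i$ is a true twin class (hence a clique), then $|V_i| < 2t$: otherwise, splitting $V_i$ into two disjoint subsets of size $t$ would already yield all $t^{2}$ edges of a $K_{t,t}$. (ii) If $V_i, V_j$ are distinct twin classes with all cross-edges present (i.e.\ adjacent in the quotient), then $\min(|V_i|,|V_j|) < t$: otherwise, picking any $t$ vertices in each class would produce a $K_{t,t}$. Call a twin class \emph{big} if it has at least $t$ vertices, and \emph{small} otherwise; then~(ii) says no two big classes can be adjacent in the quotient.

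Next I will construct a vertex cover $C$ by a very simple rule: put every vertex of every twin class into $C$, \emph{except} those in big false twin classes (which contribute nothing). To check $C$ is a vertex cover, I consider the three kinds of edges. Internal edges of a false twin class do not exist, so nothing to do; internal edges of a true twin class are covered because all of $V_i$ is in $C$; and any edge between distinct twin classes $V_i, V_j$ lies in a pair where, by~(ii), at least one side is small, and that side is fully inside $C$ (regardless of true/false type).

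Finally I bound $|C|$. By~(i), every true twin class contributes at most $2t-1$ vertices; every small class (true or false) contributes at most $t-1$; big false twin classes contribute $0$. Hence each of the at most $d$ twin classes contributes at most $2t-1$ vertices, giving $|C| \le d(2t-1) \le 2dt$, as required. I do not foresee a significant obstacle: the only subtle point is observation~(i), since one must note that $K_{t,t}$ need only be a (not necessarily induced) subgraph, so a clique of size $2t$ already contains one.
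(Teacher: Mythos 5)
Your proof is correct and uses essentially the same ingredients as the paper's: the partition into at most $d$ twin classes, the bound $|V_i|<2t$ for true twin classes, and the observation that two classes joined by cross-edges cannot both have $\ge t$ vertices. The only cosmetic difference is that you describe the cover directly (take everything except the big false twin classes) where the paper builds it greedily, but the resulting set and the $2dt$ bound are the same.
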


\begin{proof}
    We construct a vertex cover $S$ with at most $2dt$ vertices of $G$ in the following manner. 
    Suppose that the vertex set of $G$ is partitioned into twin classes $\{V_1, \dots, V_d\}$.
    For each true twin class $V_i$, we put all the vertices in $V_i$ into $S$.
    Since $V_i$ induces a clique in $G$ and $G$ does not have $K_{t, t}$ as a subgraph, we have $|V_i| < 2t$.
    If $S$ is a vertex cover of $G$, we are done.
    Suppose that there is an edge that is not covered by $S$.
    As each true twin class is already contained in $S$, this edge spans two distinct false twin classes $V_i$ and $V_j$.
    Since $G$ has no $K_{t, t}$, at least one of $V_i$ and $V_j$, say $V_i$, has less than $t$ vertices.
    We then add all the vertices in $V_i$ to $S$.
    As long as there is an edge not covered by $S$, we continue this process, and then we eventually have a vertex cover $S$ of $G$ with $|S| \leq 2dt$.
\end{proof}

\begin{proof}[Proof of \cref{lem:nd_k-planar}]
    Let $t = 8\ceils[\big]{\sqrt{2k}}$.
    By~\cref{lem:nd-sparse-vc-bound}, it suffices to show that $K_{t, t}$ is not $k$-planar.
    Clearly, $K_{t, t}$ has $16 \ceils[\big]{\sqrt{2k}}$ vertices and $4\ceils[\big]{\sqrt{2k}} \cdot 16 \ceils[\big]{\sqrt{2k}}$ edges.
    Since every $k$-planar graph with $n$ vertices has at most $3.81n\sqrt{2k}$ edges~\cite{Ackerman19}, $K_{t,t}$ is not $k$-planar.
\end{proof}

}

\end{document}